\documentclass[twocolumn]{IEEEtran}

\usepackage{amsmath,graphics,amssymb,graphicx}
\usepackage{subfigure}
\usepackage{url}
\usepackage{color,soul}
\usepackage{verbatim}
\usepackage{tabularx}
\usepackage{cite}
\usepackage{multirow}
\usepackage{subfigure}
\usepackage{booktabs}
\usepackage{amsmath,amssymb}
\usepackage{bm}
\usepackage{algorithm}
\usepackage{algpseudocode}
\usepackage{amsthm}
\usepackage{booktabs}
\usepackage[table]{xcolor}

\newtheorem{theorem}{Theorem}

\newtheorem{proposition}{Proposition}

\begin{document}
\setlength{\abovedisplayskip}{5pt}
\setlength{\belowdisplayskip}{5pt}
\title{{Channel Geometry Preserving Generative Models for CSI Feedback in MU-MIMO}}

\author{\IEEEauthorblockN{Juseong~Park, Taekyun~Lee, Foad~Sohrabi, and Jeffrey~G.~Andrews} \\ 

\thanks{This work has been supported by Nokia Bell Labs and in part by the National Science Foundation RINGS program grant CNS-2148141.}
\thanks{Juseong~Park, Taekyun~Lee and Jeffrey~G.~Andrews are with 6G@UT in the Wireless Networking and Communications Group, The University of Texas at Austin, Austin, TX 78712, USA (email: juseong.park@utexas.edu; taekyun@utexas.edu; jandrews@ece.utexas.edu).}
\thanks{Foad Sohrabi is with Radio Systems Research, Nokia Bell Labs, Murray Hill, NJ 07974 USA (e-mail: foad.sohrabi@nokia-bell-labs.com).}
}
\maketitle

\begin{abstract}
Under limited feedback, channel state information (CSI) reconstruction for multiuser multiple-input multiple-output (MU-MIMO) precoding is challenging, since the precoder should provide not only beamforming gain, but also robust suppression of inter-user interference. 
This paper revisits this classic problem by developing powerful decompression techniques at the base station (BS) that harness modern deep generative models.  
We propose two novel BS-side flow-matching generative CSI decoders that progressively transform either a simple prior or an initial CSI estimate into a reconstruction consistent with the feedback-conditioned channel distribution.
We further show theoretically that conventional minimum mean-squared-error (MMSE)-based reconstructions of CSI often result in centroid-like compromises that fail to preserve the posterior geometry needed for inter-user interference suppression.  
In other words, MU-MIMO precoding based on MSE-oriented CSI reconstructions can be suboptimal, since such reconstructions frequently fail to maintain user orthogonality.
Numerical results in FR3 spectrum show that the proposed flow-based methods consistently outperform MSE-based baselines in downlink sum-rate, with the advantage especially pronounced in interference-limited and spatially dense regimes. 
These results suggest that posterior-guided flow reconstruction is better aligned with MU-MIMO precoding than traditional MMSE-oriented CSI feedback, since it better preserves the channel geometry needed for user separation.
\end{abstract}

\begin{IEEEkeywords}
Channel state information (CSI), feedback communication, finite-rate feedback, flow matching, generative decoder, multiuser multiple-input multiple-output (MU-MIMO).
\end{IEEEkeywords}

\section{Introduction} \label{sec:introduction}
Multiuser multiple-input multiple-output (MU-MIMO) promises many-fold gains in downlink spectral efficiency for cellular systems relative to single user (SU) MIMO, enabled by the large asymmetry between the large antenna array at the base station (BS) and the much smaller one at the UE \cite{Goldsmith03}.  These gains have remained doggedly difficult to realize in practice, for several reasons \cite{Andrews25}.  One important reason is that suppressing inter-user interference requires accurate channel state information (CSI) at the BS, implying a large number of feedback bits \cite{Jindal06}.   Artificial intelligence (AI)-enhanced CSI feedback has emerged as a promising direction, since it can provide better MIMO channel reconstruction with fewer bits \cite{Guo22}.  AI-aided MIMO channel compression and reconstruction has attracted growing interest in standardization \cite{Xingqin25}.  However, most existing approaches still treat CSI feedback mainly as a problem of accurate channel reconstruction. In the context of downlink MU-MIMO precoding, the key issue is not the per-user CSI reconstruction fidelity, but instead, the ultimate performance of the MU-MIMO precoder.  Although related, a determining factor for MU-MIMO precoding efficacy is whether the CSI feedback and reconstruction preserves the relative channel geometry, since that geometry governs interference suppression across users. When this geometry is distorted, interference suppression degrades and the achievable sum rate is quickly reduced. This suggests that effective CSI reconstruction depends more on preserving precoding-relevant geometry than on minimizing conventional reconstruction error in a mean squared error (MSE) sense.

\subsection{Related Work}
\label{subsec:related_work}

Prior to the recent consideration of AI-based CSI \mbox{feedback} methods, finite-rate feedback---often referred to as limited feedback---has been extensively studied in MIMO systems. Broad overviews of this literature can be found in \cite{Love08} and Section~5.10 of \cite{Heath18}. Representative works include Grassmannian codebook design for single-user beamforming \cite{Love03}, finite-rate CSI feedback for zero-forcing precoding in the multiuser setting \cite{Jindal06}, and block diagonalization with multi-antenna users through channel subspace quantization based on chordal distance \cite{Ravindran08}.

More recently, AI-based CSI feedback has emerged as an active research direction with some promising results. A notable early work is CsiNet \cite{Wen18}, an autoencoder framework that compresses the channel matrix into a low-dimensional real-valued latent vector using a convolutional neural network. 
Although the latent representation is lower-dimensional than the original channel vector, it remains real-valued and must therefore be quantized for finite-rate feedback. Common approaches include uniform quantization \cite{Lu20_uniform}, $\mu$-law quantization \cite{Guo20}, and learned quantization \cite{Yin25}, which are largely scalar quantization schemes applied element-wise to latent representations. Beyond scalar quantization, vector quantization has also been studied as a more efficient alternative that exploits dependencies among latent elements \cite{Rizzello23}.
End-to-end MIMO architectures have also studied CSI feedback through the joint optimization of pilot transmission, feedback, and precoding \cite{Sohrabi21, Jang22, Park25, Zhu26, Carpi26, Chen26}. In these works, user-side neural networks generate feedback bits or custom pilots (possibly analog), and then the entire system is trained to maximize the sum rate. As a result, the feedback is not aimed so much at accurate channel reconstruction, but instead at its usefulness in computing an effective precoder.

Generative diffusion models have recently been explored across a broad range of wireless communication tasks \cite{Fan26}. Diffusion-based channel priors have also been used for MIMO channel estimation \cite{Fesl24,Chen25,Fan26}. More directly related to CSI feedback, diffusion models have been applied to side-information-aided CSI compression and variable-rate joint source-channel coding (JSCC) of MIMO CSI \cite{Kim25,Ankireddy26}. While these CSI-feedback-oriented works show that diffusion models can improve CSI reconstruction quality in challenging settings, they still evaluate performance primarily through reconstruction-oriented criteria such as MSE. Thus, they do not directly address how the learned feedback representation affects MU-MIMO precoding quality or the resulting sum-rate performance.

From a deployment perspective, recent work has explored BS-side AI-based CSI feedback architectures, where the learned model is placed at the BS while the UE retains a lightweight feedback mechanism \cite{Guo25}. Despite substantial progress in CSI feedback, the relationship between finite-rate CSI feedback and MU-MIMO precoding performance remains insufficiently understood, especially when the key issue is not channel reconstruction fidelity itself but the preservation of channel geometry relevant to interference suppression.

\subsection{Contributions}
\label{subsec:contributions}

This paper studies finite-rate CSI feedback for MU-MIMO precoding from a BS-side decoding perspective. Representing continuous-valued CSI with finite feedback bits inevitably introduces reconstruction errors, and the key issue is how the resulting channel posterior should be exploited for MU-MIMO precoding.
Thus, we develop BS-side generative CSI decoders together with a theoretical analysis showing that reconstruction methods that better preserve channel geometry are more effective for MU-MIMO precoding. The main contributions of this paper are summarized as follows.

\subsubsection{Flow-matching generative CSI decoders}
We propose two BS-side flow-matching generative decoders for CSI feedback in MU-MIMO precoding, assuming that the UE-side encoder and feedback bits are given.
Flow matching learns a continuous source-to-target flow by matching a neural vector field to transport directions; in our setting, the flow is conditioned on the feedback bits and targets the corresponding channel distribution.
The first is a \emph{flow-based CSI refiner}, which initializes from a deterministic front-end estimate and performs multi-step posterior-guided generative decoding to produce a refined CSI estimate. 
The second is a \emph{direct conditional flow decoder}, which reconstructs CSI directly from the feedback bits by transporting a simple prior toward a feedback-conditioned channel distribution. 
The use of flow matching is motivated by its flexibility in specifying the source distribution: unlike diffusion models \cite{Ho20}, which generate samples by reversing a Gaussian-noise process, flow matching \cite{Lipman23} can connect an arbitrary source distribution to a desired target distribution. 
This flexibility is particularly useful in our setting, where the refiner starts from an MSE-based initial estimate for guided refinement, whereas the direct decoder starts from Gaussian noise and relies on conditioning to model the posterior channel distribution. 
The two models therefore provide complementary operating points.

\subsubsection{Posterior directional geometry analysis for MU-MIMO precoding}
We develop a theoretical analysis of posterior directional geometry for MU-MIMO precoding under finite-rate CSI feedback. By decomposing the signal-to-interference-plus-noise ratio (SINR) into the desired-signal and interference terms, we characterize how posterior geometry affects each component differently. For the desired-signal term, we show that the relevant local criterion is a conditional expected chordal distortion, whose posterior-optimal direction is given by the principal eigenvector of the conditional second-moment matrix, thereby clarifying when the MSE-induced conditional mean becomes suboptimal. 
For the interference term, we develop an interference characterization and a finite-mixture surrogate for comparing posterior sampling with conditional-mean averaging. The comparison is governed by three competing effects: (i) the geometry-collapsing effect of conditional-mean averaging, (ii) the residual overlap left under posterior sampling, and (iii) the separation of the resulting directions. These results clarify when MSE-oriented channel reconstruction becomes unfavorable for MU-MIMO precoding and yield a sufficient condition under which posterior sampling reduces interference relative to conditional-mean averaging.

\subsubsection{Empirical evidence beyond MSE-oriented CSI feedback}
We provide extensive numerical experiments for MU-MIMO downlink precoding on 7~GHz 3GPP UMi channels generated by QuaDRiGa. The experiments show that MSE-oriented CSI feedback alone is not sufficient for high-quality MU-MIMO precoding, since lower normalized MSE (NMSE) does not necessarily imply better sum-rate. The proposed flow-based decoders consistently outperform MSE-oriented deterministic baselines, with gains becoming more pronounced in interference-limited and spatially dense regimes. In an eight-user setting with 48 feedback bits per user at SNR $=20$~dB, they achieve up to $2.4$ times the sum-rate of MSE-based deterministic baselines while reaching about $89\%$ of Full CSI.
Aggregate desired-signal and interference diagnostics, together with DFT-domain and posterior feedback-cell analyses, further substantiate this conclusion. 
Our findings are analogous to recurring observations in computer vision that MSE-based objectives often produce blurry or otherwise suboptimal reconstructions from a perceptual-quality standpoint \cite{Mathieu16, Pathak16, Blau18}. Here, the relevant structure is not visual fidelity, but the spatial channel geometry which ultimately governs MU-MIMO precoding performance.

The rest of this paper is organized as follows. Section~\ref{sec:system_model} introduces the system model and formulates finite-rate CSI feedback for MU-MIMO precoding. Section~\ref{sec:proposed_method} presents the proposed flow-based CSI refiner and direct conditional flow decoder. Section~\ref{sec:posterior_directional_geometry} develops the posterior directional-geometry analysis under limited CSI feedback and studies how posterior sampling and MSE-oriented reconstruction interact with ZF interference geometry. Section~\ref{sec:numerical_results} provides numerical results and further diagnostic analyses. Finally, Section~\ref{sec:conclusion} concludes the paper.

\begin{figure*}[tbp]
    \centering
    \includegraphics[width=0.8\textwidth]{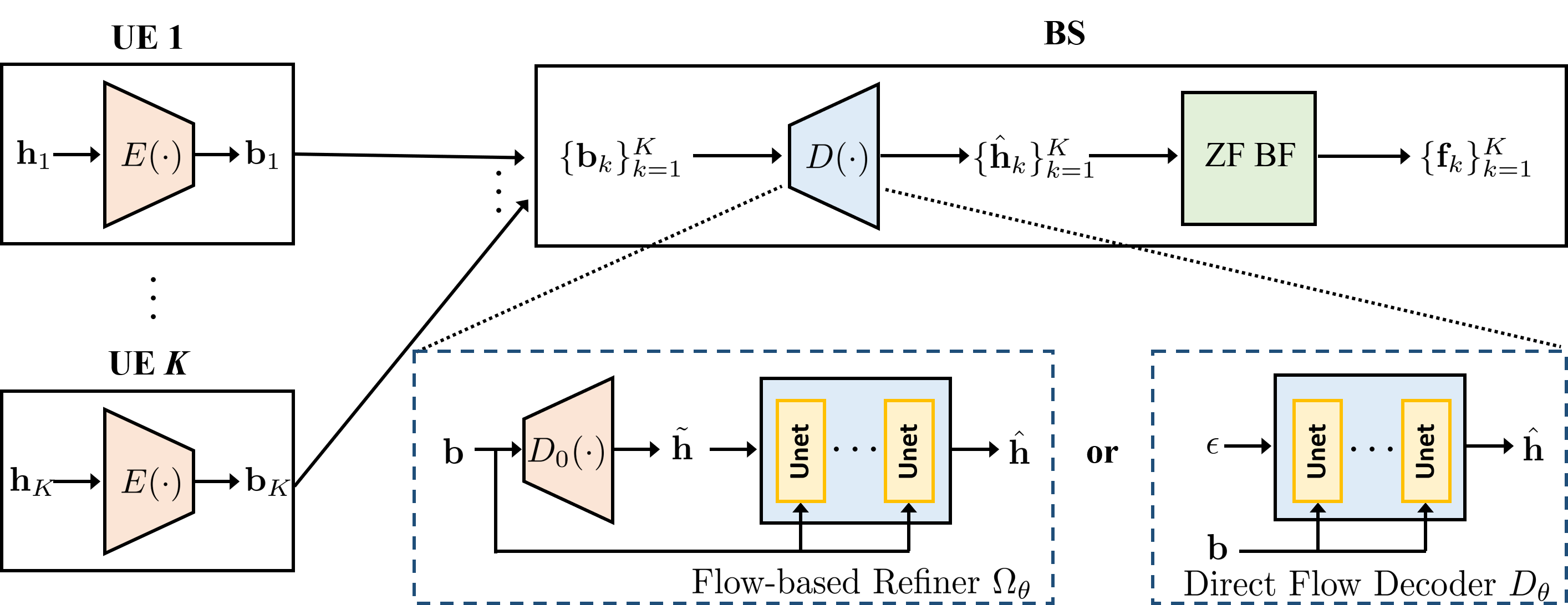}
    \caption{The proposed CSI feedback methods for MU-MIMO precoding.}
    \label{figure:proposed_method}
\end{figure*}

\section{System Model}
\label{sec:system_model}

Consider a downlink MU-MIMO system in which a BS equipped with $N$ antennas simultaneously serves $K$ single-antenna UEs, where $K \leq N$. Let $\mathbf{h}_k \in \mathbb{C}^{N}$ denote the instantaneous downlink channel vector of UE $k$. Each UE is assumed to acquire its own CSI through downlink training and to feed back a finite-rate CSI representation to the BS.

Specifically, UE $k$ applies a $B$-bit feedback encoder $E(\cdot)$ to generate the feedback bits
\begin{equation*}
\mathbf{b}_k = E(\mathbf{h}_k) 
 \in \{0,1\}^B,
\end{equation*}
where $B$ denotes the feedback budget per user. Upon receiving $\mathbf{b}_k$, the BS reconstructs the channel as
\begin{equation*}
\hat{\mathbf{h}}_k = D(\mathbf{b}_k),
\end{equation*}
where $D(\cdot)$ denotes the BS-side CSI decoder. The specific form of $D(\cdot)$ depends on the reconstruction architecture and will be detailed in Section~\ref{sec:proposed_method}.

Let $\mathbf{H} \triangleq [\mathbf{h}_1,\ldots,\mathbf{h}_K]^H \in \mathbb{C}^{K\times N}$ denote the true channel matrix and $\hat{\mathbf{H}}$ its reconstructed version at the BS. 
The BS employs a normalized ZF precoder based on $\hat{\mathbf{H}}$. The corresponding unnormalized precoder is
\begin{equation*}
\tilde{\mathbf{F}} = \hat{\mathbf{H}}^H \left(\hat{\mathbf{H}}\hat{\mathbf{H}}^H\right)^{-1},
\end{equation*}
where the $k$th column of $\tilde{\mathbf{F}}$ is denoted by $\tilde{\mathbf{f}}_k$. To satisfy a total transmit-power constraint $P$ with equal power allocation across users, each beamformer is normalized as
$\mathbf{f}_{k} = \sqrt{\frac{P}{K}}~
{\tilde{\mathbf{f}}_{k}}
/{\|\tilde{\mathbf{f}}_{k}\|_2}.$
Thus, the final precoding matrix is
\begin{equation*}
\mathbf{F} = [\mathbf{f}_1,\ldots,\mathbf{f}_K] \in \mathbb{C}^{N \times K}.
\end{equation*}

For the transmitted symbol vector $\mathbf{s} \in \mathbb{C}^{K}$ satisfying
$\mathbb{E}[\mathbf{s}\mathbf{s}^H] = \mathbf{I}$, the BS transmit signal is given by
$\mathbf{x} = \mathbf{F}\mathbf{s}$. The received signal at UE $k$ is then
\begin{equation*}
r_k= \mathbf{h}_k^H \mathbf{f}_k s_k +
\sum_{n\neq k}\mathbf{h}_k^H \mathbf{f}_n s_n + n_k,
\end{equation*}
where $n_k \sim \mathcal{CN}(0,\sigma^2)$ denotes additive white Gaussian noise. The corresponding achievable rate is
\begin{equation}
R_k =
\log_2\!\left(1 + \frac{|\mathbf{h}_k^H \mathbf{f}_k|^2}{\sigma^2 + \sum_{n\neq k} |\mathbf{h}_k^H \mathbf{f}_n|^2} \right).
\label{eqn:sinr}
\end{equation}

Under perfect CSI, ZF completely removes inter-user interference. Under finite-rate CSI feedback, however, the precoder is designed from $\hat{\mathbf{H}}$ rather than $\mathbf{H}$, and residual inter-user interference generally remains. Thus, the downlink sum-rate performance depends critically on how the received feedback bits are converted into BS-side CSI.

In this paper, we consider a practical architecture in which the UE-side encoder $E(\cdot)$ and the corresponding BS-side initial decoder $D_0(\cdot)$ are assumed to be given and fixed; we refer to this pair as the \emph{front end}.
The goal is then to determine how the BS-side decoder $D(\cdot)$ should reconstruct CSI for ZF precoding from the information provided by the front end, either the initial estimate $D_0(\mathbf{b}_k)$ or the feedback bits $\mathbf{b}_k$. We evaluate $D(\cdot)$ through the resulting expected downlink sum rate
\begin{equation*}
R_{\mathrm{sum}}=\mathbb{E}\left[\sum_{k=1}^{K} R_k\right],
\end{equation*}
where $\hat{\mathbf{h}}_k$ denotes the final reconstruction produced by $D(\cdot)$ and each $R_k$ is computed using the ZF beams obtained from $\hat{\mathbf{H}}$.

Since the mapping $\mathbf{b}_k = E(\mathbf{h}_k)$ is many-to-one, each feedback vector generally corresponds to multiple plausible channel realizations rather than to a unique channel vector. Therefore, under finite-rate feedback, exact channel recovery is generally impossible, and the key issue is how the decoder exploits the resulting channel posterior to produce a channel estimate that preserves the spatial geometry needed for effective ZF precoding.

\section{Proposed Method}
\label{sec:proposed_method}

In this section, we present two flow-based CSI reconstruction methods: \emph{flow-based CSI refiner} and \emph{direct conditional flow decoder}, as illustrated in Fig.~\ref{figure:proposed_method}. 

To follow the flow-matching convention in \cite{Lipman23}, we use an equivalent real-valued representation of the complex channel obtained by stacking its real and imaginary parts. Thus, $\mathbf{h} \in \mathbb{C}^{N}$ is represented in $\mathbb{R}^{2N}$, or equivalently as a $2 \times N$ tensor in implementation. With a slight abuse of notation, we continue to denote this real-valued representation by $\mathbf{h}$ throughout this section.
To simplify notation, we drop the user index in this section.

\subsection{Overview of Flow Matching}
\label{sec:fm_overview}

We briefly review the flow-matching framework underlying the proposed decoders. Flow matching learns a time-dependent neural vector field that transports samples from a source distribution to a target data distribution \cite{Lipman23}. This transport is described by an ordinary differential equation (ODE) parameterized by a flow time variable $t\in[0,1]$, which is an internal generative variable rather than the physical time index of the wireless channel. 
Flow matching directly learns the transport dynamics, avoiding the explicit noise-corruption and reversal formulation commonly used in diffusion models.

Formally, flow matching specifies a reference probability path between a source distribution and a target distribution, and trains a time-dependent neural vector field $\boldsymbol{\phi}_{\theta}(\mathbf{x},t)$ to match the corresponding target vector field $\mathbf{w}(\mathbf{x},t)$. 
In the optimal-transport view of flow matching, a natural choice for the probability path is the straight-line interpolation \cite{Lipman23},
\begin{equation*}
\mathbf{x}(t)=(1-t)\mathbf{x}(0)+t\mathbf{x}(1),
\end{equation*}
between a source sample $\mathbf{x}(0)\sim q_0$ and a target sample $\mathbf{x}(1)\sim q_1$, which yields the closed-form target direction
\begin{equation*}
\mathbf{w}(\mathbf{x}(t),t) \triangleq
\frac{d\mathbf{x}(t)}{dt} =
\mathbf{x}(1)-\mathbf{x}(0).
\end{equation*}
The parameter $\theta$ is learned by sampling points along this path and fitting $\boldsymbol{\phi}_{\theta}(\mathbf{x}(t),t)$ to the corresponding target transport direction $\mathbf{w}(\mathbf{x}(t),t)$.

At generation time, the learned vector field governs the transport through
\begin{equation*}
\frac{d\mathbf{x}(t)}{dt} =
\boldsymbol{\phi}_{\theta}(\mathbf{x}(t),t), \qquad t\in[0,1],
\end{equation*}
with initialization $\mathbf{x}(0)\sim q_0$ and output given by the terminal state $\mathbf{x}(1)$. The transport is implemented by numerically integrating this ODE from $t=0$ to $t=1$ \cite{Lipman23}.

This formulation offers two practical advantages in our setting. First, the source distribution can be adapted to the reconstruction task, allowing generation to start either from an existing front-end channel estimate or from a simple prior. Second, since generation follows a learned ODE transport, sampling can be more efficient than diffusion models.

In addition, in the proposed methods, the transport in both cases is conditioned on the received feedback bits and targets the corresponding conditional channel distribution. The difference lies in the source distribution: the flow-based CSI refiner starts from a source distribution centered around the front-end reconstruction, whereas the direct conditional flow decoder starts from a simple Gaussian prior.

\subsection{Flow-Based CSI Refiner}
\label{sec:flowdec_refiner}

We first consider a two-stage BS-side decoding architecture built on the fixed front end introduced in Section~\ref{sec:system_model}. Given feedback bits $\mathbf{b}$, the BS obtains a deterministic front-end estimate $\tilde{\mathbf{h}} = D_0(\mathbf{b})$, where $D_0(\cdot)$ denotes the BS-side initial decoder associated with the fixed front end. As illustrated in Fig.~\ref{figure:proposed_method}, the flow-based CSI refiner then produces
\begin{equation*}
\hat{\mathbf{h}} = D(\mathbf{b})
= \Omega_{\theta}(\tilde{\mathbf{h}},\mathbf{b})
= \Omega_{\theta}(D_0(\mathbf{b}),\mathbf{b}),
\end{equation*}
where $\Omega_{\theta}(\cdot,\cdot)$ denotes the refinement module conditioned on the feedback bits. Thus, $D(\cdot)$ refines the deterministic front-end estimate using posterior information implied by $\mathbf{b}$.

The refiner specializes the flow-matching framework in Section~\ref{sec:fm_overview} to feedback-conditioned CSI reconstruction. 
It operates as a deterministic refinement of the front-end reconstruction at inference, where the source corresponds to the point mass at $\tilde{\mathbf{h}}=D_0(\mathbf{b})$. 
During training, however, we use a locally perturbed version of this source, following \cite{Welker25}, to regularize the flow and learn a locally stable refinement field around the front-end estimate.
Given $\mathbf{b}$, the target distribution is $q_1(\mathbf{h}\mid\mathbf{b})=p(\mathbf{h}\mid\mathbf{b})$, and the training source distribution is chosen as
\begin{equation*}
q_0(\mathbf{h}\mid\mathbf{b}) =
\mathcal{N}\left(\tilde{\mathbf{h}},\sigma_0^2\mathbf{I}\right),
\end{equation*}
where $\sigma_0>0$ controls the perturbation level around the front-end estimate.
The refiner therefore learns a feedback-conditioned vector field
\begin{equation}
\frac{d\mathbf{h}(t)}{dt} =
\boldsymbol{\phi}_{\theta}(\mathbf{h}(t),\mathbf{b},t), \qquad t\in[0,1],
\label{eq:refiner_ode}
\end{equation}
that transports samples from the front-end-centered training source distribution toward the conditional channel distribution.

To train the refiner, for each training pair $(\mathbf{h},\mathbf{b})$, we sample $t\sim\mathcal{U}(0,1)$ and $\boldsymbol{\epsilon}\sim\mathcal{N}(\mathbf{0},\mathbf{I})$, and define
\begin{equation}
\mathbf{h}(t) = (1-t)\bigl(\tilde{\mathbf{h}}+\sigma_0\boldsymbol{\epsilon}\bigr) + t\mathbf{h}.
\label{eq:refiner_interpolation}
\end{equation}

The corresponding conditional path distribution is
\begin{equation*}
p(\mathbf{h}(t)\mid \mathbf{h},\mathbf{b}) =
\mathcal{N}\!\left( \tilde{\mathbf{h}} + t(\mathbf{h}-\tilde{\mathbf{h}}), (1-t)^2\sigma_0^2\mathbf{I} \right),
\end{equation*}
whose mean interpolates from the front-end estimate to the target channel, while the covariance shrinks to zero as $t\to1$.

For the path in \eqref{eq:refiner_interpolation}, the target transport vector is
\begin{equation*}
\mathbf{w} = \mathbf{h} - \bigl(\tilde{\mathbf{h}}+\sigma_0\boldsymbol{\epsilon}\bigr),
\end{equation*}
which is constant for a fixed tuple $(\mathbf{h},\mathbf{b},\boldsymbol{\epsilon})$. The refiner is trained using the conditional flow-matching objective
\begin{equation}
\mathcal{L}_{\mathrm{ref}} =
\mathbb{E}_{\substack{ (\mathbf{h},\mathbf{b}),\\
t\sim\mathcal{U}(0,1),\\
\boldsymbol{\epsilon}\sim\mathcal{N}(\mathbf{0},\mathbf{I})}}
\!\left[ \left\| \boldsymbol{\phi}_{\theta}(\mathbf{h}(t),\mathbf{b},t)
- \mathbf{w} \right\|_2^2 \right].
\label{eq:refiner_loss}
\end{equation}

At inference, the refined estimate is obtained by solving \eqref{eq:refiner_ode} over $t\in[0,1]$ with deterministic initial condition $\mathbf{h}(0)=\tilde{\mathbf{h}}$ with $\mathbf{\epsilon}=\mathbf{0}$. 
In our implementation, we use the second-order midpoint method to solve the ODE numerically \cite{Welker25}. 
Let $N_{\mathrm{step}}$ denote the number of flow steps, with $\Delta t=1/N_{\mathrm{step}}$ and $t_n=n\Delta t$. 
We use the midpoint method to solve the learned ODE. 
At step $n$, we first evaluate the vector field at the current state as $\mathbf{k}_1=\boldsymbol{\phi}_{\theta}(\mathbf{h}_n,\mathbf{b},t_n)$, and then evaluate it at the midpoint state and midpoint flow time as $\mathbf{k}_2=\boldsymbol{\phi}_{\theta}(\mathbf{h}_n+\frac{\Delta t}{2}\mathbf{k}_1,\mathbf{b},t_n+\frac{\Delta t}{2})$. 
The update is then
\begin{equation*}
\mathbf{h}_{n+1} = \mathbf{h}_n+\Delta t\,\mathbf{k}_2.
\end{equation*}
The terminal state $\hat{\mathbf{h}}=\mathbf{h}(1)$ is taken as the final refined CSI estimate.

\subsection{Direct Conditional Flow Decoder}
\label{sec:direct_flow_decoder}

We next consider a fully generative BS-side decoder, referred to as the \emph{direct conditional flow decoder}. Unlike the flow-based CSI refiner, which first forms a deterministic front-end estimate $\tilde{\mathbf{h}}=D_0(\mathbf{b})$ and then refines it, the direct decoder reconstructs CSI in a single stage by conditioning the flow directly on the received feedback bits $\mathbf{b}$ and generating the channel from a Gaussian prior, as in diffusion-type generative models \cite{Ho20}.

The direct decoder uses the same feedback-conditioned target distribution as the refiner, $q_1(\mathbf{h}\mid \mathbf{b}) = p(\mathbf{h}\mid \mathbf{b})$, but replaces the front-end-centered source with a standard Gaussian prior, $q_0(\mathbf{h}\mid \mathbf{b}) = \mathcal{N}(\mathbf{0},\mathbf{I})$.

The training path and flow-matching objective follow the same form as in Section~\ref{sec:flowdec_refiner}, with the front-end-centered source sample $\tilde{\mathbf{h}}+\sigma_0\boldsymbol{\epsilon}$ replaced by $\boldsymbol{\epsilon}\sim\mathcal{N}(\mathbf{0},\mathbf{I})$.
Equivalently, for each training pair $(\mathbf{h},\mathbf{b})$ and $t\sim\mathcal{U}(0,1)$, the path becomes $\mathbf{h}(t)=(1-t)\boldsymbol{\epsilon}+t\mathbf{h}$, yielding the target transport vector $\mathbf{w}=\mathbf{h}-\boldsymbol{\epsilon}$.

At inference, given $\mathbf{b}$, the decoder differs from the refiner only in its initialization: instead of starting from deterministic $\tilde{\mathbf{h}}$, it starts from $\mathbf{h}(0)=\boldsymbol{\epsilon}$ with $\boldsymbol{\epsilon}\sim\mathcal{N}(\mathbf{0},\mathbf{I})$.
The subsequent flow ODE is solved in the same manner as in the refiner, and the terminal state $\hat{\mathbf{h}}=\mathbf{h}(1)$ is taken as the reconstructed CSI.

\subsection{Vector-Field Network and Implementation Details}
\label{sec:vector_field_network}

\begin{figure}[t]
\centering
\subfigure[Residual block used in the vector-field network.]{
    \includegraphics
    [width=2.9in]{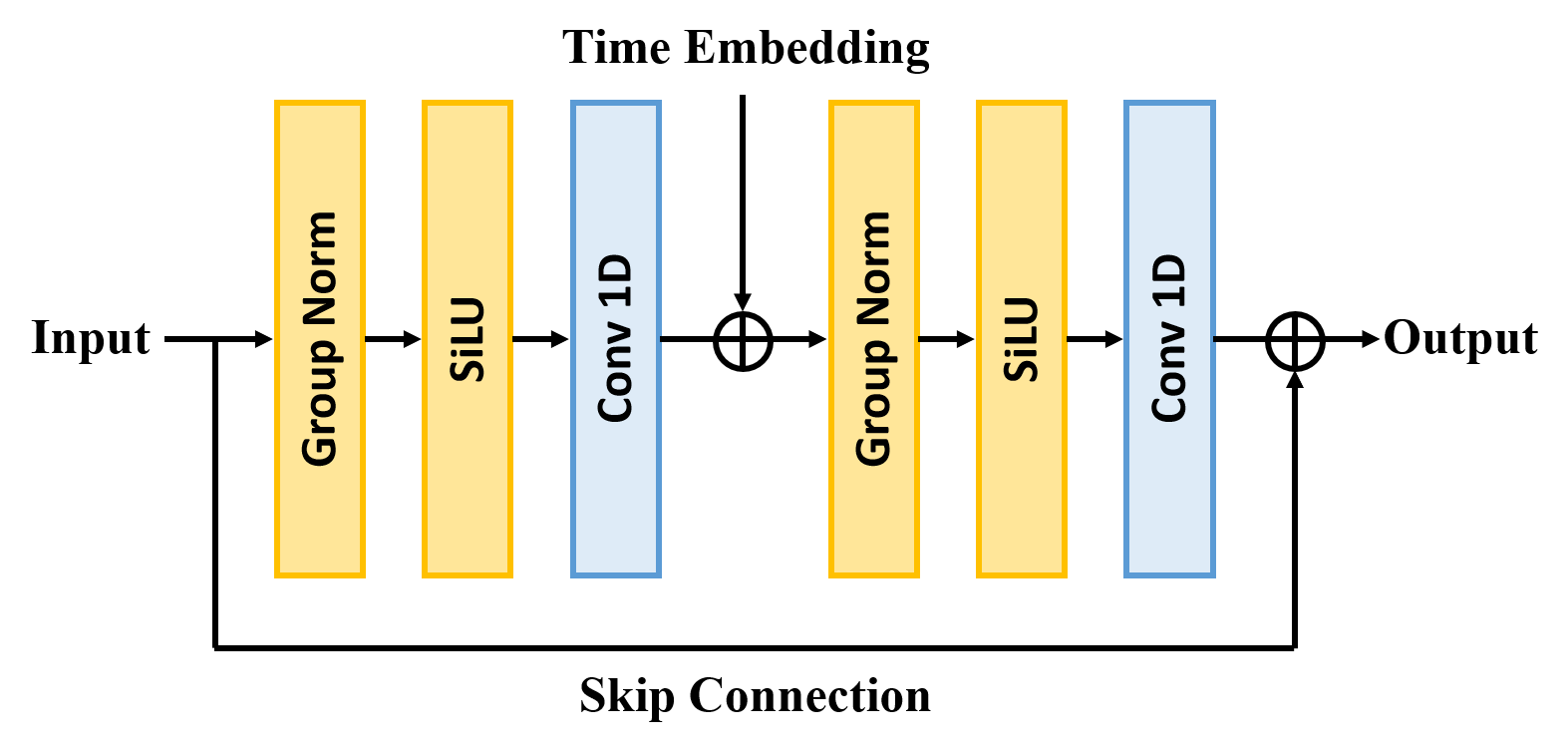}
    \label{fig:resblock}}
\vspace{0.5em}
\subfigure[Overall architecture of the proposed 1D U-Net vector-field network with feedback bits conditioning and time-embedding injection.]{
    \includegraphics
    [width=3.4in]{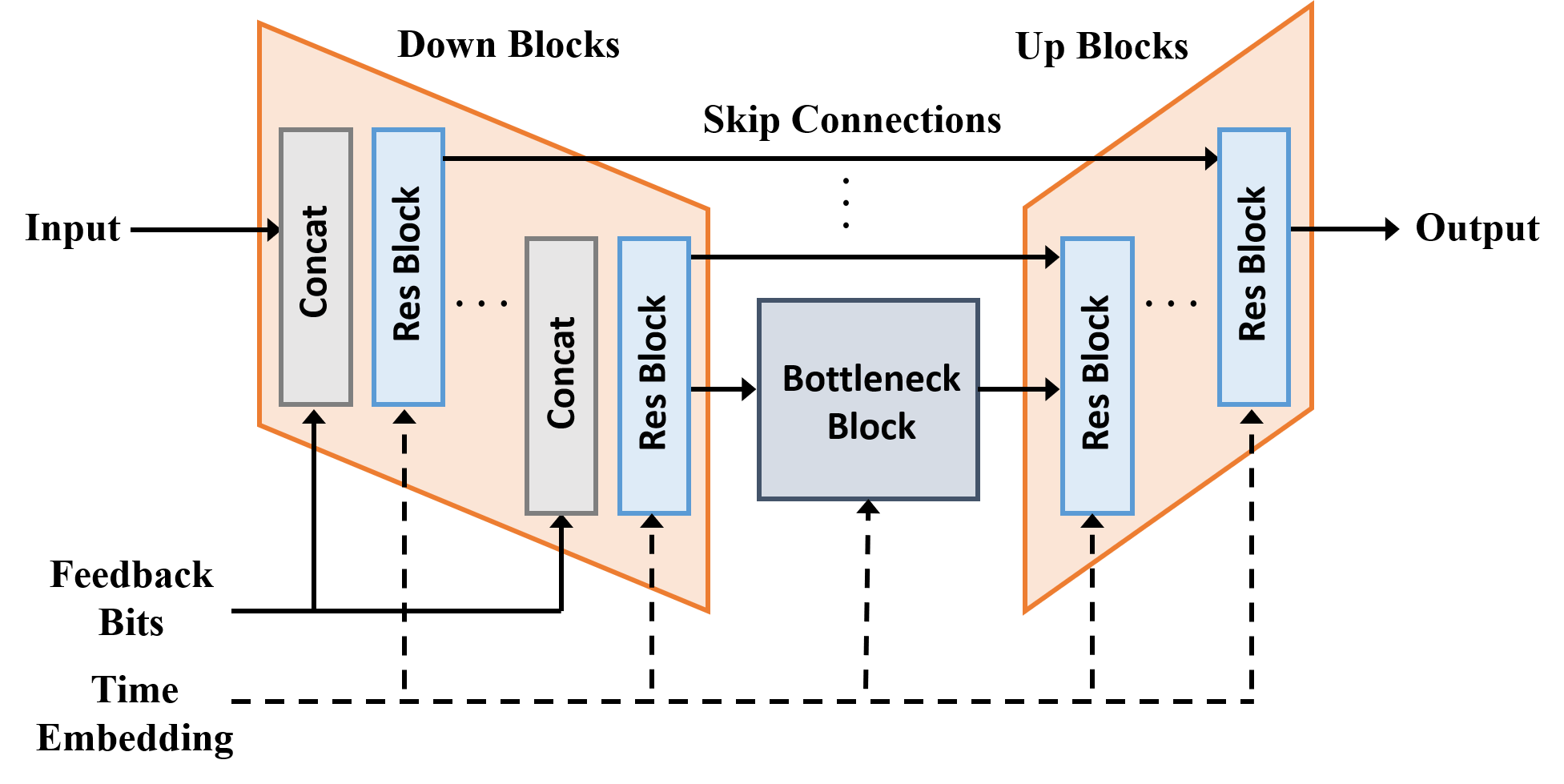}
    \label{fig:unet}}
\caption{Network architecture of the proposed vector-field model.}
\label{fig:network_architecture}
\end{figure}

Both the flow-based CSI refiner and the direct conditional flow decoder use the same neural vector-field backbone. 
Specifically, $\boldsymbol{\phi}_{\theta}$ is implemented by a multi-scale one-dimensional (1D) U-Net, a U-shaped convolutional architecture with skip connections between matching resolutions.
The complex CSI vector is represented as a $2\times N$ real-valued tensor, whose two rows correspond to real and imaginary feature maps, while the 1D convolutions operate along the antenna dimension. 
This architecture can capture local spatial correlations at multiple scales.
Let $N_{\mathrm{down}}$ and $N_{\mathrm{up}}$ denote the numbers of residual blocks in each downsampling and upsampling block, respectively. 
As illustrated in Fig.~\ref{fig:resblock}, each residual block consists of GroupNorm, SiLU, and two 1D convolutional layers, with the time embedding added after the first convolution.

The conditioning information derived from the feedback bits $\mathbf{b}$ is incorporated by concatenating it with the current intermediate state at the network input. This conditioning representation is then progressively downsampled and reintroduced into the downsampling path at multiple resolutions \cite{Welker25}, allowing the network to preserve feedback-dependent information across scales rather than injecting it only once at the input layer.

The vector field depends on the flow time $t$, so the network is conditioned on $t$ at each ODE step through a time embedding. We encode $t$ using Gaussian Fourier features \cite{Song21}. Specifically, let $\mathbf{W}\sim\mathcal{N}(\mathbf{0},\sigma_f^2\mathbf{I})$ be fixed after initialization. For $t\in[0,1]$, the Fourier feature is defined as
\begin{equation}
\gamma(t) = \left[ \sin(2\pi t\mathbf{W}),\; \cos(2\pi t\mathbf{W}) \right].
\end{equation}
This feature is processed by a two-layer multilayer perceptron to produce a time embedding, which is linearly projected and added to each residual block after the first convolution. 
Group normalization and SiLU activations are used throughout the network.

To reduce the effect of short-term fluctuations in the training weights, we maintain an exponential moving average (EMA) of the network parameters, which can be expressed as
\begin{equation}
\bar{\theta} \leftarrow
\beta\,\bar{\theta} + (1-\beta)\theta,
\end{equation}
where $\beta\in(0,1)$ is a decay factor close to one. Following common practice in diffusion-based generative modeling, the EMA parameters are used for validation and inference, since they typically yield more stable and reliable predictions than the instantaneous training weights \cite{Ho20}.

\begin{algorithm}[t]
\caption{Inference Procedure of the Proposed Flow-Based CSI Reconstruction}
\label{alg:flow_inference}
\begin{algorithmic}
\State \textbf{Initialize:} feedback bits $\mathbf{b}$, trained vector-field $\boldsymbol{\phi}_{\bar{\theta}}$, number of flow steps $N_{\mathrm{step}}$, mode $\in\{\text{refiner},\text{direct decoder}\}$
\If{mode = refiner}
    \State Set $\mathbf{h}(0)=\tilde{\mathbf{h}}$ with $\tilde{\mathbf{h}}=D_0(\mathbf{b})$ 
\Else
    \State Set $\mathbf{h}(0)=\boldsymbol{\epsilon}$ by sampling $\boldsymbol{\epsilon}\sim\mathcal{N}(\mathbf{0},\mathbf{I})$
\EndIf
\State Set $\Delta t=1/N_{\mathrm{step}}$
\For{$n=0,\dots,N_{\mathrm{step}}-1$}
    \State Set $t_n=n\Delta t$
    \State Compute $\mathbf{k}_1=\boldsymbol{\phi}_{\bar{\theta}}(\mathbf{h}_n,\mathbf{b},t_n)$
    \State Compute $\mathbf{k}_2=\boldsymbol{\phi}_{\bar{\theta}}\!\left(\mathbf{h}_n+\frac{\Delta t}{2}\mathbf{k}_1,\mathbf{b},t_n+\frac{\Delta t}{2}\right)$
    \State Update $\mathbf{h}_{n+1}=\mathbf{h}_n+\Delta t\,\mathbf{k}_2$
\EndFor
\State \textbf{Return:} the reconstructed CSI $\hat{\mathbf{h}}=\mathbf{h}_{N_{\mathrm{step}}}$
\end{algorithmic}
\end{algorithm}

\section{Posterior Directional Geometry under Limited CSI Feedback}
\label{sec:posterior_directional_geometry}

This section studies CSI reconstruction under finite-rate feedback from a posterior directional-geometry viewpoint. 
By decomposing the SINR into desired-signal and interference terms, we show that posterior geometry affects these two components in different ways. 
The analysis proceeds in two steps. 
First, we study the desired-signal term and characterize the posterior-optimal direction for the feedback-conditioned self-alignment criterion, clarifying when the MSE-oriented conditional mean becomes suboptimal.
Second, we turn to the interference term and use a finite-mixture posterior surrogate to compare posterior sampling with conditional-mean averaging under ZF precoding.

\subsection{Feedback-Conditioned Self-Alignment and Interference}
\label{subsec:chordal_and_interference}

The feedback mapping $E:\mathbb{C}^{N}\to\{0,1\}^B$ is generally many-to-one. 
For a feedback vector $\mathbf{b}$, we denote the corresponding feedback cell as
\begin{equation}
\mathcal{C}(\mathbf{b})\triangleq\{\mathbf{h}:E(\mathbf{h})=\mathbf{b}\}.  
\label{eqn:def_fb_cell}
\end{equation}
Thus, a received feedback vector specifies a posterior distribution over plausible channels in this cell rather than a unique channel realization. 
The key issue is therefore not perfect channel recovery, but whether the reconstruction error is shaped favorably for the resulting sum-rate.

Under ZF precoding, the relevant geometric object is the unit-norm channel direction. Decomposing $\mathbf{h}_k=\rho_k\mathbf{u}_k$, where $\rho_k=\|\mathbf{h}_k\|_2$ and $\|\mathbf{u}_k\|_2=1$, \eqref{eqn:sinr} can be rewritten as
\begin{equation*}
R_k = \log_2 \left(1+\frac{\rho_k^2\left|\mathbf{u}_k^H\mathbf{f}_k\right|^2}{\sigma^2+\rho_k^2\sum_{n\neq k}\left|\mathbf{u}_k^H\mathbf{f}_n\right|^2}\right).
\end{equation*}
For fixed $\rho_k$, the rate is improved by increasing the desired-signal gain $\left|\mathbf{u}_k^H\mathbf{f}_k\right|^2$ while suppressing the aggregate interference $\sum_{n\neq k}\left|\mathbf{u}_k^H\mathbf{f}_n\right|^2$.

We first consider the desired-signal term. 
For user $k$, the directional uncertainty given the received feedback bits $\mathbf{b}_k$ is described by the conditional second-moment matrix 
$\mathbf{R}(\mathbf{b}_k)\triangleq\mathbb{E}[\mathbf{u}_k\mathbf{u}_k^H\mid\mathbf{b}_k]$.
For a fixed beam $\mathbf{f}_k$, the posterior expected directional gain of user $k$ is
\begin{equation}
\mathbb{E}\left[\left|\mathbf{u}_k^H \mathbf{f}_k\right|^2 \mid \mathbf{b}_k\right] =
\mathbf{f}_k^H \mathbf{R}(\mathbf{b}_k) \mathbf{f}_k.
\label{eq:desired_signal}
\end{equation}
This expression is coupled through the ZF beam $\mathbf{f}_k$, which is jointly determined by all reconstructed user directions $\{\hat{\mathbf{u}}_j\}_{j=1}^K$. Therefore, it does not directly reveal which representative direction is favorable within the feedback cell of user $k$. We separate this single-user directional component from the subsequent ZF coupling using the feedback-conditioned self-alignment criterion
\begin{equation}
\mathbb{E}\!\left[\left|\mathbf{u}_k^H\hat{\mathbf{u}}_k\right|^2 \mid \mathbf{b}_k\right] = \hat{\mathbf{u}}_k^H\mathbf{R}(\mathbf{b}_k)\hat{\mathbf{u}}_k .
\label{eq:local_self_alignment_criterion}
\end{equation}
This criterion leads directly to the conditional expected chordal distortion
\begin{equation}
\mathbb{E}\!\left[d_c^2\bigl(\mathbf{u}_k,\hat{\mathbf{u}}_k\bigr) \,\middle|\, \mathbf{b}_k \right] = 1-\hat{\mathbf{u}}_k^H\mathbf{R}(\mathbf{b}_k)\hat{\mathbf{u}}_k,
\label{eq:chordal_R_relation}
\end{equation}
where
$d_c^2\bigl(\mathbf{u}_k,\hat{\mathbf{u}}_k\bigr) \triangleq 1-\left|\mathbf{u}_k^H\hat{\mathbf{u}}_k\right|^2$.
Thus, maximizing the self-alignment criterion is equivalent to minimizing the conditional expected chordal distortion, matching the chordal-distance measure commonly used for single-user beamforming with limited feedback. 
Although this criterion is closely related to the desired-signal gain when ZF beam synthesis only mildly perturbs the intended direction, it remains a local per-user criterion in MU-MIMO.

Next, we turn to the interference term. 
Compared to the desired-signal term, which involves the beam intended for user $k$, the interference term involves all beams intended for the other users. 
For each $n\neq k$, the ZF beam $\mathbf{f}_n$ satisfies $\hat{\mathbf{u}}_k^H\mathbf{f}_n=0$, but the actual leakage $\left|\mathbf{u}_k^H\mathbf{f}_n\right|^2$ remains whenever $\mathbf{u}_k$ differs from $\hat{\mathbf{u}}_k$. 
Therefore, interference depends on the interaction between the reconstruction error of user $k$ and the collection of other-user ZF beams, rather than on a per-user chordal criterion alone.

With $\mathbf{B}\triangleq[\mathbf{b}_1,\ldots,\mathbf{b}_K]$ denoting the feedback bits from the scheduled UEs, the resulting total interference metric is
\begin{align}
\mathcal{I}(\mathbf{B};\hat{\mathbf{u}}_1,\ldots,\hat{\mathbf{u}}_K)
& \triangleq \sum_{k=1}^K \sum_{n\neq k} \mathbb{E}\!\left[ \left| \mathbf{u}_k^H \mathbf{f}_n \right|^2 \,\middle|\, \mathbf{B}\right] \nonumber\\
& = \sum_{k=1}^K \sum_{n\neq k} \mathbf{f}_n^H\mathbf{R}(\mathbf{b}_k)\mathbf{f}_n .
\label{eq:zf_interference_general}
\end{align}
In summary, desired-signal geometry admits a self-alignment criterion, whereas interference geometry remains coupled through the ZF beams and depends on the orientation of directional errors relative to the nulling directions.

\subsection{Posterior-Optimal Direction for Self-Alignment}
\label{subsec:bayes_optimal_direction}

Section~\ref{subsec:chordal_and_interference} introduced the self-alignment criterion in \eqref{eq:local_self_alignment_criterion}, which captures the per-user directional component of the desired-signal geometry. We now characterize the posterior-optimal representative direction for self-alignment. For an arbitrary feedback vector $\mathbf{b}$, let
$\mathbf{R}(\mathbf{b})\triangleq\mathbb{E}[\mathbf{u}\mathbf{u}^H\mid\mathbf{b}]$.

\begin{theorem}[Posterior-optimal direction for self-alignment]
\label{thm:chordal_bayes_opt}
Let $\lambda_{\max}$ denote the largest eigenvalue of $\mathbf{R}(\mathbf{b})$. The direction minimizing the conditional expected chordal distortion is any principal eigenvector of $\mathbf{R}(\mathbf{b})$, that is,
\begin{equation*}
\arg\min_{\|\hat{\mathbf{u}}\|_2=1} \mathbb{E}\!\left[ d_c^2\bigl(\mathbf{u},\hat{\mathbf{u}}\bigr) \,\middle|\,
\mathbf{b} \right] =
\arg\max_{\|\hat{\mathbf{u}}\|_2=1} \hat{\mathbf{u}}^H \mathbf{R}(\mathbf{b}) \hat{\mathbf{u}}.
\end{equation*}
Moreover,
\begin{equation}
\max_{\|\hat{\mathbf{u}}\|_2=1}
\hat{\mathbf{u}}^H \mathbf{R}(\mathbf{b}) \hat{\mathbf{u}}
= \lambda_{\max}
\label{eq:self_alignment_max}
\end{equation}
and
\begin{equation}
\min_{\|\hat{\mathbf{u}}\|_2=1}
\mathbb{E}\!\left[ d_c^2\bigl(\mathbf{u},\hat{\mathbf{u}}\bigr)
\,\middle|\, \mathbf{b}\right] = 1-\lambda_{\max}.
\label{eq:min_chordal_bayes}
\end{equation}
\end{theorem}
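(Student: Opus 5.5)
The plan is to reduce the statement to a Rayleigh-quotient problem for the Hermitian positive semidefinite matrix $\mathbf{R}(\mathbf{b})$, using the identity \eqref{eq:chordal_R_relation} already established in Section~\ref{subsec:chordal_and_interference}.

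First, I would verify \eqref{eq:chordal_R_relation} for a fixed deterministic unit vector $\hat{\mathbf{u}}$. Write $|\mathbf{u}^H\hat{\mathbf{u}}|^2=\hat{\mathbf{u}}^H\mathbf{u}\mathbf{u}^H\hat{\mathbf{u}}$. By linearity of conditional expectation, and since $\hat{\mathbf{u}}$ is a function of $\mathbf{b}$ only, this gives
\begin{equation*}
\mathbb{E}\!\left[d_c^2(\mathbf{u},\hat{\mathbf{u}})\,\middle|\,\mathbf{b}\right]=1-\hat{\mathbf{u}}^H\mathbf{R}(\mathbf{b})\hat{\mathbf{u}}.
\end{equation*}
The constant $1$ does not depend on $\hat{\mathbf{u}}$. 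Hence minimizing the left side over the unit sphere is equivalent to maximizing $\hat{\mathbf{u}}^H\mathbf{R}(\mathbf{b})\hat{\mathbf{u}}$, and the two $\arg$ sets coincide. This settles the displayed equality of minimizer sets.

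Next, I would record the structural properties of $\mathbf{R}(\mathbf{b})$. It is Hermitian, because $(\mathbf{u}\mathbf{u}^H)^H=\mathbf{u}\mathbf{u}^H$. It is positive semidefinite, because $\mathbf{x}^H\mathbf{R}\mathbf{x}=\mathbb{E}[|\mathbf{u}^H\mathbf{x}|^2\mid\mathbf{b}]\ge 0$. It has unit trace, because $\operatorname{tr}(\mathbf{u}\mathbf{u}^H)=\|\mathbf{u}\|_2^2=1$. The expectation is finite since $\mathbf{u}$ is bounded.

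Then apply the spectral theorem: $\mathbf{R}(\mathbf{b})=\sum_i\lambda_i\mathbf{v}_i\mathbf{v}_i^H$ with an orthonormal eigenbasis. Expanding $\hat{\mathbf{u}}=\sum_i c_i\mathbf{v}_i$ with $\sum_i|c_i|^2=1$ gives
\begin{equation*}
\hat{\mathbf{u}}^H\mathbf{R}(\mathbf{b})\hat{\mathbf{u}}=\sum_i\lambda_i|c_i|^2\le\lambda_{\max}.
\end{equation*}
This is the Rayleigh--Ritz bound. Equality holds exactly when all weight lies on eigenvectors with eigenvalue $\lambda_{\max}$, that is, when $\hat{\mathbf{u}}$ is a unit vector in the principal eigenspace. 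This proves \eqref{eq:self_alignment_max} and identifies the maximizers as "any principal eigenvector," including unit-modulus phase rotations, to which $d_c$ is insensitive. Substituting back into the identity gives \eqref{eq:min_chordal_bayes}.

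No step is a real obstacle. The only point needing care is the interchange of the quadratic form with the conditional expectation. This requires $\hat{\mathbf{u}}$ to be $\mathbf{b}$-measurable, that is, deterministic given the feedback, so it can be pulled outside $\mathbb{E}[\cdot\mid\mathbf{b}]$. I would also note that the maximizer is unique up to phase only when $\lambda_{\max}$ is simple; otherwise the statement's "any principal eigenvector" phrasing covers the eigenspace.
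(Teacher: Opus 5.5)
Your proposal is correct and matches the paper's proof: you use the identity $\mathbb{E}[d_c^2(\mathbf{u},\hat{\mathbf{u}})\mid\mathbf{b}]=1-\hat{\mathbf{u}}^H\mathbf{R}(\mathbf{b})\hat{\mathbf{u}}$ to turn the problem into a Rayleigh-quotient maximization, whose value is $\lambda_{\max}$ and which is attained on the principal eigenspace, and then substitute back. The paper simply cites this Rayleigh-quotient fact; your explicit spectral expansion, together with the check that $\mathbf{R}(\mathbf{b})$ is Hermitian positive semidefinite, proves it rather than assuming it.
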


\begin{proof}
From \eqref{eq:chordal_R_relation},
\begin{equation*}
\mathbb{E}\!\left[
d_c^2\bigl(\mathbf{u},\hat{\mathbf{u}}\bigr) \,\middle|\, \mathbf{b} \right] =
1-\hat{\mathbf{u}}^H \mathbf{R}(\mathbf{b}) \hat{\mathbf{u}}.
\end{equation*}
Thus, minimizing the conditional expected chordal distortion over $\|\hat{\mathbf{u}}\|_2=1$ is equivalent to maximizing the Rayleigh quotient
$\hat{\mathbf{u}}^H \mathbf{R}(\mathbf{b}) \hat{\mathbf{u}}$.
The maximum is $\lambda_{\max}$ and is attained by any principal eigenvector of $\mathbf{R}(\mathbf{b})$, which proves \eqref{eq:self_alignment_max}. Substituting this maximum into \eqref{eq:chordal_R_relation} yields \eqref{eq:min_chordal_bayes}.
\end{proof}

Theorem~\ref{thm:chordal_bayes_opt} identifies the posterior-optimal direction for the feedback-conditioned self-alignment criterion. 
The optimal direction is given by the dominant eigenvector of the posterior second-moment matrix $\mathbf{R}(\mathbf{b})$. 
While this criterion is optimal for single-user beamforming alignment, MU-MIMO precoding remains coupled across users through the ZF beams.
We next compare this posterior-optimal direction with the direction obtained from the MSE-optimal estimate.

\begin{figure}[t]
    \centering
    \includegraphics
    [width=3.3in]
    {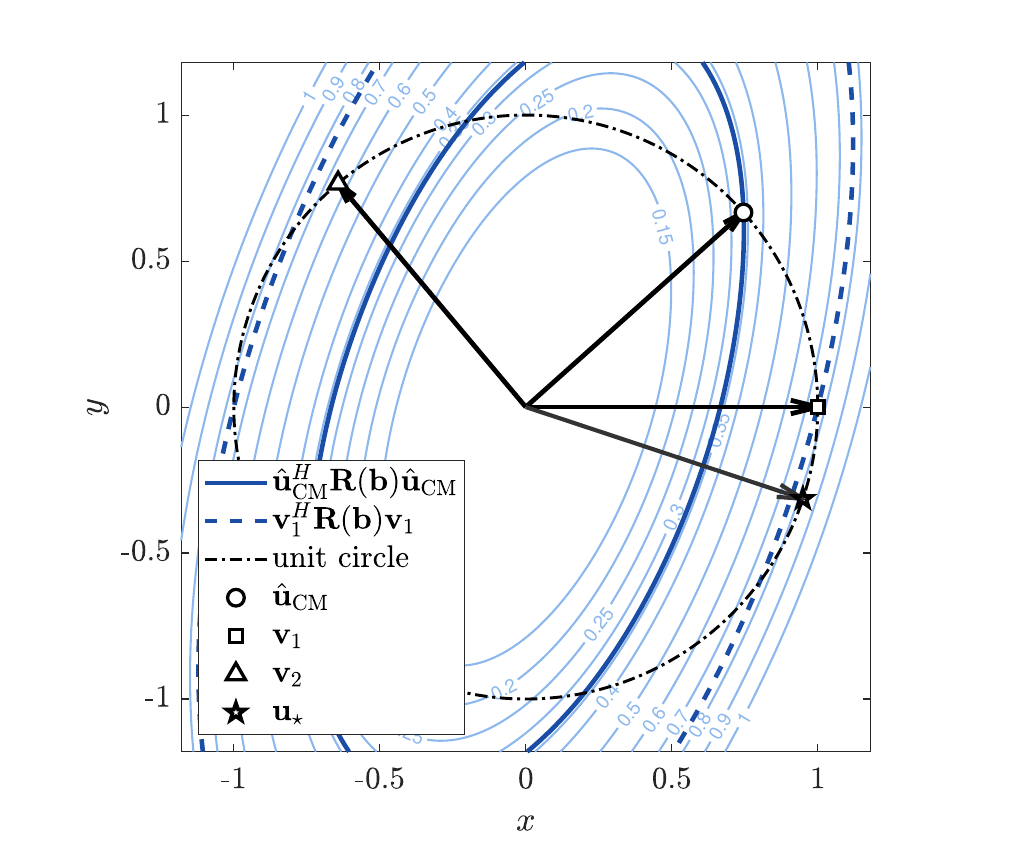}
    \caption{Toy example under the posterior model $0.6\,\mathcal{N}(\mathbf{v}_1,0.01\mathbf{I}) + 0.4\,\mathcal{N}(\mathbf{v}_2,0.01\mathbf{I})$ with $\mathbf{v}_1=[1,0]^T$ and $\mathbf{v}_2=[\cos 130^\circ,\sin 130^\circ]^T$, where the contours of $\mathbf{u}^T \mathbf{R}\mathbf{u}$ show that $\hat{\mathbf{u}}_{\mathrm{CM}}$ is worse than $\mathbf{v}_1$ $(0.337$ versus $0.760)$, while the optimal direction ${\mathbf{u}}_{\star}$ attains $0.824$.}
    \label{fig:toy_q_contour_cm_suboptimal}
\end{figure}

\subsection{MSE Averaging Versus Posterior-Guided Reconstruction}
\label{subsec:conditional_mean_suboptimality}

The optimal direction identified in Theorem~\ref{thm:chordal_bayes_opt} is governed by the posterior second-moment matrix $\mathbf{R}(\mathbf{b})$. In practice, however, many CSI decoders are trained to minimize the Euclidean MSE of the channel reconstruction. Under this criterion, the Bayes-optimal estimator is the conditional mean estimator \cite{Kay93}, given by
\begin{equation*}
\hat{\mathbf{h}}_{\mathrm{CM}}(\mathbf{b}) = \mathbb{E}\left[\mathbf{h}\mid \mathbf{b}\right].
\end{equation*}
For the directional criteria above, we evaluate this estimator through its normalized direction
$\hat{\mathbf{u}}_{\mathrm{CM}}(\mathbf{b})
\triangleq
{\hat{\mathbf{h}}_{\mathrm{CM}}(\mathbf{b})}/
{\|\hat{\mathbf{h}}_{\mathrm{CM}}(\mathbf{b})\|_2}$,
assuming $\hat{\mathbf{h}}_{\mathrm{CM}}(\mathbf{b})\neq\mathbf{0}$.

The conditional-mean direction is particularly vulnerable when the posterior $p(\mathbf{u}\mid\mathbf{b})$ is multimodal and places comparable mass on several well-separated plausible directions. In this case, Euclidean averaging may produce a centroid-like compromise that does not preserve the underlying channel geometry. As illustrated in Fig.~\ref{fig:toy_q_contour_cm_suboptimal}, this can make $\hat{\mathbf{u}}_{\mathrm{CM}}$ inferior to mode-aligned directions under the chordal-gain criterion in \eqref{eq:chordal_R_relation}. Section~\ref{subsec:zf_cell_to_cell} further examines this averaging effect through the total interference metric $\mathcal{I}(\mathbf{B};\hat{\mathbf{u}}_1,\ldots,\hat{\mathbf{u}}_K)$.

This posterior-geometry viewpoint also clarifies why the proposed flow-based decoders can better preserve channel geometry. They learn feedback-conditioned transport toward the channel posterior associated with $\mathbf{b}$, encouraging reconstructions to follow posterior channel structure rather than averaging across distinct plausible directions.

For the refiner in Section~\ref{sec:flowdec_refiner}, the transport is anchored at the front-end reconstruction $\tilde{\mathbf{h}}(\mathbf{b})=D_0(\mathbf{b})$. During training, the source sample is perturbed as
\begin{equation*}
\hat{\mathbf{h}}_{\Omega}(\mathbf{b};\boldsymbol{\epsilon}) \triangleq
\Omega_\theta\bigl( \tilde{\mathbf{h}}(\mathbf{b})+\sigma_0\boldsymbol{\epsilon}, \mathbf{b} \bigr).
\end{equation*}
At inference, we use $\boldsymbol{\epsilon}=\mathbf{0}$, so the refiner is deterministic. Nevertheless, it remains posterior-guided because its learned transport targets $p(\mathbf{h}\mid\mathbf{b})$ rather than the Euclidean conditional mean.

In contrast, the direct flow decoder in Section~\ref{sec:direct_flow_decoder},
\begin{equation*}
\hat{\mathbf{h}}_{D}(\mathbf{b};\boldsymbol{\epsilon}) \triangleq
D_\theta(\mathbf{b};\boldsymbol{\epsilon}),
\qquad
\boldsymbol{\epsilon}\sim\mathcal{N}(\mathbf{0},\mathbf{I}),
\end{equation*}
uses random initialization to generate posterior-consistent channel realizations. 
Thus, the direct decoder can represent posterior variability through different initial noise samples, helping avoid collapse into a single averaged reconstruction. 
This stochasticity, however, does not make an arbitrary sample optimal for sum-rate. 
The refiner instead uses perturbation as a training regularizer and sets $\boldsymbol{\epsilon}=\mathbf{0}$ at inference for stable deterministic refinement.

\subsection{ZF Interference Geometry: Geometry Collapse, Residual Overlap, and Separation of the Directions}
\label{subsec:zf_cell_to_cell}

The preceding analysis identifies which direction is locally favorable for the desired-signal term. 
We now turn to the interference term, where the key issue is not only whether each user is well aligned with its own posterior, but also how the reconstructed directions shape the nulls used for inter-user interference suppression. 
To study this effect, we compare stochastic posterior sampling with deterministic conditional-mean averaging by approximating each conditional posterior $p(\mathbf{u}_k\mid\mathbf{b}_k)$ by a finite mixture over representative modes.

Specifically, conditioned on $\mathbf{b}_k$, let $\{\mathbf{v}_k(m)\}_{m=1}^M$ denote the representative directions, with $\|\mathbf{v}_k(m)\|_2=1$, and let $\{p_k(m)\}_{m=1}^M$ denote the corresponding posterior weights, satisfying $\sum_{m=1}^M p_k(m)=1$. For notational simplicity, we let all users share the same number $M$ of representative modes.

Under posterior sampling, user $k$ selects a mode index $m_k\in\{1,\ldots,M\}$ according to the conditional law $p(m_k\mid\mathbf b_k)=p_k(m_k)$.
The selected mode indices are independent across users. Thus, for $\mathbf{m}\!=\!(m_1,\ldots,m_K)\!\in\!\{1,\ldots,M\}^K$, we write
\begin{equation*}
p(\mathbf{m}\mid\mathbf{B}) = \prod_{k=1}^K p_k(m_k).
\end{equation*}
For a selected mode tuple $\mathbf{m}$, define
\begin{equation*}
\mathcal{U}_n(\mathbf{m})
\triangleq
\operatorname{span}\{\mathbf{v}_j(m_j):j\neq n\},
\end{equation*}
and let $\mathbf{\Pi}_n(\mathbf{m})$ denote the orthogonal projector onto $\mathcal{U}_n(\mathbf{m})^\perp$. Then, the normalized ZF beam for user $n$ is
$\mathbf{f}_n(\mathbf{m}) =
{\mathbf{\Pi}_n(\mathbf{m})\mathbf{v}_n(m_n)}/
{\|\mathbf{\Pi}_n(\mathbf{m})\mathbf{v}_n(m_n)\|_2}$.
For a competing user $k\neq n$, only the non-selected modes contribute, since $\mathbf{v}_k(m_k)\in\mathcal{U}_n(\mathbf{m})$ implies $\mathbf{\Pi}_n(\mathbf{m})\mathbf{v}_k(m_k)=\mathbf{0}$.
Thus, for fixed $n$ and $k$, the corresponding posterior-averaged interference is
$\sum_{\ell\neq m_k}
p_k(\ell)
\left|
\mathbf{v}_k(\ell)^H\mathbf{f}_n(\mathbf{m})
\right|^2$.
Substituting $\mathbf{f}_n(\mathbf{m})$ into this interference term and summing over all beams $n$ and competing users $k\neq n$ yields the total interference under the selected tuple
\begin{equation*}
J_{\mathbf{m}}
\triangleq
\sum_{n=1}^K \sum_{k\neq n} \frac{
\sum_{\ell\neq m_k} p_k(\ell) \left| \mathbf{v}_k(\ell)^H
\mathbf{\Pi}_n(\mathbf{m}) \mathbf{v}_n(m_n) \right|^2}{\|\mathbf{\Pi}_n(\mathbf{m})\mathbf{v}_n(m_n)\|_2^2}.
\end{equation*}
Thus, once one mode per user has been selected, only the non-selected posterior mass contributes to the remaining interference, and averaging over $\mathbf{m}$ gives
\begin{equation*}
\mathcal{I}_{\mathrm{PS}}(\mathbf{B}) = \sum_{\mathbf{m}} p(\mathbf{m}\mid\mathbf{B})\,J_{\mathbf{m}}.
\end{equation*}
This expression shows that posterior sampling nulls the sampled mode of each user, but residual leakage can remain from other plausible modes in the same feedback cell.

For the conditional-mean benchmark, we form the conditional mean
$\boldsymbol{\mu}_k \triangleq \sum_{m=1}^M p_k(m)\mathbf{v}_k(m)$
and the associated unit-norm direction
$\hat{\mathbf{u}}_{k,\mathrm{CM}} \triangleq \boldsymbol{\mu}_k/\|\boldsymbol{\mu}_k\|_2$,
assuming $\boldsymbol{\mu}_k\neq\mathbf{0}$. Analogously to the posterior-sampling case, for user $n$, let
\begin{equation*}
\mathcal{U}_{n,\mathrm{CM}} \triangleq
\operatorname{span}\{\hat{\mathbf{u}}_{j,\mathrm{CM}}:j\neq n\},
\end{equation*}
and let $\mathbf{\Pi}_{n,\mathrm{CM}}$ denote the orthogonal projector onto $(\mathcal{U}_{n,\mathrm{CM}})^\perp$. The resulting interference under conditional-mean averaging is
\begin{equation}
\label{eq:Icm_general_k}
\mathcal{I}_{\mathrm{CM}}(\mathbf{B}) =
\sum_{n=1}^K \sum_{k\neq n} \frac{ \sum_{m=1}^M p_k(m)
\left| \mathbf{v}_k(m)^H \mathbf{\Pi}_{n,\mathrm{CM}} \hat{\mathbf{u}}_{n,\mathrm{CM}} \right|^2}{\|\mathbf{\Pi}_{n,\mathrm{CM}}\hat{\mathbf{u}}_{n,\mathrm{CM}}\|_2^2}.
\end{equation}
In contrast, conditional-mean averaging nulls the centroid-like direction of each feedback cell, but residual leakage can remain from dominant posterior modes that are not aligned with this averaged direction.

The comparison between $\mathcal{I}_{\mathrm{PS}}(\mathbf{B})$ and $\mathcal{I}_{\mathrm{CM}}(\mathbf{B})$ is governed by three effects: geometry collapse under conditional-mean averaging, residual overlap under posterior sampling, and the separation of the resulting directions. We next expose these three effects through a local comparison of the pairwise terms.

For a given tuple $\mathbf{m}$ and a pair $(k,n)$ with $k\neq n$, the posterior-sampling local term is
\begin{equation}
\mathcal{I}_{k,n}^{\mathrm{PS}}(\mathbf{m})
\triangleq
\frac{\mathcal{R}_{k,n}^{\mathrm{PS}}(\mathbf{m})}{
\eta_{n,\mathrm{PS}}(\mathbf{m})},
\label{eq:local_ps_exact}
\end{equation}
where
$\mathcal{R}_{k,n}^{\mathrm{PS}}(\mathbf{m})
\triangleq
\sum_{\ell\neq m_k} p_k(\ell) \left| \mathbf{v}_k(\ell)^H
\mathbf{\Pi}_n(\mathbf{m})\mathbf{v}_n(m_n) \right|^2$
and
$\eta_{n,\mathrm{PS}}(\mathbf{m}) \triangleq
\left\| \mathbf{\Pi}_n(\mathbf{m})\mathbf{v}_n(m_n) \right\|_2^2$. Thus, $J_{\mathbf{m}}=\sum_{n=1}^{K}\sum_{k\neq n}
\mathcal{I}_{k,n}^{\mathrm{PS}}(\mathbf{m})$, and
$\mathcal{I}_{\mathrm{PS}}(\mathbf{B})$ is obtained by averaging $J_{\mathbf{m}}$ over
$\mathbf{m}$ according to $p(\mathbf{m}\mid\mathbf{B})$.

To compare conditional-mean averaging with posterior sampling for a selected tuple $\mathbf{m}$, we rewrite the fixed conditional-mean direction into the selected-mode contribution and the remaining averaged contribution as
\begin{equation}
\hat{\mathbf{u}}_{n,\mathrm{CM}} = 
c_n(\mathbf{m})\,\mathbf{v}_n(m_n) + \mathbf{d}_n(\mathbf{m}),
\label{eq:cm_branch_decomposition}
\end{equation}
where
$c_n(\mathbf{m}) \triangleq \frac{p_n(m_n)}{\|\boldsymbol{\mu}_n\|_2}$,
$\mathbf{d}_n(\mathbf{m}) \triangleq \sum_{r\neq m_n}
\frac{p_n(r)}{\|\boldsymbol{\mu}_n\|_2}\mathbf{v}_n(r)$,
and
$\boldsymbol{\mu}_n=\sum_{r=1}^M p_n(r)\mathbf{v}_n(r)$.
Then, the corresponding conditional-mean local term is
\begin{align}
\mathcal{I}_{k,n}^{\mathrm{CM}}
&=
\frac{1}{\eta_{n,\mathrm{CM}}} \Big( \mathcal{S}_{k,n}^{\mathrm{CM}}(\mathbf{m})
\nonumber\\
&+
\sum_{\ell\neq m_k} p_k(\ell) \left| c_n(\mathbf{m})\,a_{k,n,\ell}^{\mathrm{CM}}(\mathbf{m}) + b_{k,n,\ell}^{\mathrm{CM}}(\mathbf{m}) \right|^2\Big),
\label{eq:local_cm_exact}
\end{align}
where
$a_{k,n,\ell}^{\mathrm{CM}}(\mathbf{m}) \triangleq \mathbf{v}_k(\ell)^H\mathbf{\Pi}_{n,\mathrm{CM}}\mathbf{v}_n(m_n)$
and
$b_{k,n,\ell}^{\mathrm{CM}}(\mathbf{m})
\triangleq
\mathbf{v}_k(\ell)^H\mathbf{\Pi}_{n,\mathrm{CM}}\mathbf{d}_n(\mathbf{m})$.
Here,
\begin{equation*}
\mathcal{S}_{k,n}^{\mathrm{CM}}(\mathbf{m}) \triangleq p_k(m_k)
\left| \mathbf{v}_k(m_k)^H\mathbf{\Pi}_{n,\mathrm{CM}}\hat{\mathbf{u}}_{n,\mathrm{CM}} \right|^2,
\end{equation*}
which quantifies the leakage of the selected competitor branch under the conditional-mean projector itself, and
\begin{equation*}
\eta_{n,\mathrm{CM}} \triangleq \left\| \mathbf{\Pi}_{n,\mathrm{CM}}\hat{\mathbf{u}}_{n,\mathrm{CM}} \right\|_2^2,
\end{equation*}
which quantifies the projected energy of the resulting conditional-mean direction.

Thus, \eqref{eq:local_ps_exact} and \eqref{eq:local_cm_exact} expose the three effects directly: $\mathcal{R}_{k,n}^{\mathrm{PS}}(\mathbf{m})$ captures the residual overlap left under posterior sampling, $\mathbf{d}_n(\mathbf{m})$ captures the geometry-collapsing effect of conditional-mean averaging, and $\eta_{n,\mathrm{PS}}(\mathbf{m})$ and $\eta_{n,\mathrm{CM}}$ capture the separation of the resulting directions. 
In the regimes where the proposed methods provide the largest gains, the dominant issue for MSE-oriented reconstruction is typically geometry collapse, while residual overlap is the cost of posterior sampling and the separation terms control the conditioning of the ZF beams.

\begin{proposition}[Idealized same-projector comparison]
\label{prop:idealized_same_projector}
Fix a tuple $\mathbf{m}$ and a pair $(k,n)$ with $k\neq n$. Assume that
\begin{equation}
\mathbf{\Pi}_{n,\mathrm{CM}} = \mathbf{\Pi}_n(\mathbf{m}).
\label{eq:same_projector_assumption}
\end{equation}
Then $\mathcal{S}_{k,n}^{\mathrm{CM}}(\mathbf{m})=0$.
If, in addition, the cross term is nonnegative, namely
\begin{equation}
\sum_{\ell\neq m_k} p_k(\ell)\, \Re\!\left[ \left( a_{k,n,\ell}^{\mathrm{CM}}(\mathbf{m}) \right)^* b_{k,n,\ell}^{\mathrm{CM}}(\mathbf{m}) \right] \ge 0,
\label{eq:cross_nonnegative_same_projector}
\end{equation}
and
\begin{equation}
\mathcal{C}_{k,n}^{\mathrm{CM}}(\mathbf{m}) >
\left(\frac{\eta_{n,\mathrm{CM}}}{\eta_{n,\mathrm{PS}}(\mathbf{m})} - c_n(\mathbf{m})^2 \right)
\mathcal{R}_{k,n}^{\mathrm{PS}}(\mathbf{m}),
\label{eq:local_condition_same_projector}
\end{equation}
where
$\mathcal{C}_{k,n}^{\mathrm{CM}}(\mathbf{m})
\triangleq
\sum_{\ell\neq m_k}
p_k(\ell)\left|b_{k,n,\ell}^{\mathrm{CM}}(\mathbf{m})\right|^2$,
it follows that
\begin{equation}
\mathcal{I}_{k,n}^{\mathrm{CM}} >
\mathcal{I}_{k,n}^{\mathrm{PS}}(\mathbf{m}),
\label{eq:cm_worse_same_projector}
\end{equation}
that is, conditional-mean averaging is locally worse than posterior sampling.
\end{proposition}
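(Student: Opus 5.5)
The plan is a direct algebraic comparison of the two local terms \eqref{eq:local_ps_exact} and \eqref{eq:local_cm_exact} under the same-projector assumption \eqref{eq:same_projector_assumption}.

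\emph{Step 1: $\mathcal{S}_{k,n}^{\mathrm{CM}}(\mathbf{m})=0$.} By definition, $\mathbf{\Pi}_n(\mathbf{m})$ projects onto the orthogonal complement of $\mathcal{U}_n(\mathbf{m})=\operatorname{span}\{\mathbf{v}_j(m_j):j\neq n\}$. Since $k\neq n$, this span contains $\mathbf{v}_k(m_k)$. Hence $\mathbf{\Pi}_n(\mathbf{m})\mathbf{v}_k(m_k)=\mathbf{0}$. The projector is Hermitian, so
\begin{equation*}
\mathbf{v}_k(m_k)^H\mathbf{\Pi}_{n,\mathrm{CM}}\hat{\mathbf{u}}_{n,\mathrm{CM}}=\bigl(\mathbf{\Pi}_n(\mathbf{m})\mathbf{v}_k(m_k)\bigr)^H\hat{\mathbf{u}}_{n,\mathrm{CM}}=0,
\end{equation*}
and therefore $\mathcal{S}_{k,n}^{\mathrm{CM}}(\mathbf{m})=0$.

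\emph{Step 2: expand the remaining sum in \eqref{eq:local_cm_exact}.} Under \eqref{eq:same_projector_assumption}, $a_{k,n,\ell}^{\mathrm{CM}}(\mathbf{m})=\mathbf{v}_k(\ell)^H\mathbf{\Pi}_n(\mathbf{m})\mathbf{v}_n(m_n)$. Consequently,
\begin{equation*}
\sum_{\ell\neq m_k}p_k(\ell)\bigl|a_{k,n,\ell}^{\mathrm{CM}}(\mathbf{m})\bigr|^2=\mathcal{R}_{k,n}^{\mathrm{PS}}(\mathbf{m}).
\end{equation*}
Expanding $|c_n a+b|^2=c_n^2|a|^2+|b|^2+2c_n\Re[a^*b]$, where $c_n(\mathbf{m})>0$ is real, gives
\begin{align*}
\eta_{n,\mathrm{CM}}\,\mathcal{I}_{k,n}^{\mathrm{CM}}
&= c_n(\mathbf{m})^2\mathcal{R}_{k,n}^{\mathrm{PS}}(\mathbf{m})+\mathcal{C}_{k,n}^{\mathrm{CM}}(\mathbf{m})\\
&\quad+2c_n(\mathbf{m})\sum_{\ell\neq m_k}p_k(\ell)\Re\!\left[(a_{k,n,\ell}^{\mathrm{CM}})^*b_{k,n,\ell}^{\mathrm{CM}}\right].
\end{align*}
By \eqref{eq:cross_nonnegative_same_projector}, the last term is nonnegative, so
\begin{equation*}
\eta_{n,\mathrm{CM}}\,\mathcal{I}_{k,n}^{\mathrm{CM}}\ge c_n^2\mathcal{R}_{k,n}^{\mathrm{PS}}+\mathcal{C}_{k,n}^{\mathrm{CM}}.
\end{equation*}

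\emph{Step 3: conclude.} Applying \eqref{eq:local_condition_same_projector}, the right-hand side is strictly larger than
\begin{equation*}
c_n^2\mathcal{R}^{\mathrm{PS}}_{k,n}+\Bigl(\frac{\eta_{n,\mathrm{CM}}}{\eta_{n,\mathrm{PS}}}-c_n^2\Bigr)\mathcal{R}^{\mathrm{PS}}_{k,n}=\eta_{n,\mathrm{CM}}\,\mathcal{I}^{\mathrm{PS}}_{k,n}(\mathbf{m}).
\end{equation*}
Dividing by $\eta_{n,\mathrm{CM}}>0$ yields \eqref{eq:cm_worse_same_projector}.

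\emph{Main obstacle.} The only delicate point is that $\eta_{n,\mathrm{CM}}$ and $\eta_{n,\mathrm{PS}}(\mathbf{m})$ must be strictly positive so that the local terms are defined and the division is legitimate. I would state this as the implicit standing assumption already used when writing the normalized ZF beams, since those beams exist only if the projected directions are nonzero. I would also note that $c_n(\mathbf{m})$ is real, which is what makes the cross term reduce to $2c_n\Re[\cdot]$.
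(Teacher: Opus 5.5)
Your proposal is correct and follows the paper's proof essentially step for step. You derive $\mathcal{S}_{k,n}^{\mathrm{CM}}(\mathbf{m})=0$ from $\mathbf{v}_k(m_k)\in\mathcal{U}_n(\mathbf{m})$, identify $\sum_{\ell\neq m_k}p_k(\ell)|a_{k,n,\ell}^{\mathrm{CM}}(\mathbf{m})|^2$ with $\mathcal{R}_{k,n}^{\mathrm{PS}}(\mathbf{m})$, expand the square, drop the nonnegative cross term, and invoke \eqref{eq:local_condition_same_projector}; your remarks on the Hermitian projector, the realness of $c_n(\mathbf{m})$, and the positivity of $\eta_{n,\mathrm{CM}}$ and $\eta_{n,\mathrm{PS}}(\mathbf{m})$ simply make explicit what the paper leaves implicit.
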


\begin{proof}
See Appendix~\ref{app:proof_idealized_same_projector}.
\end{proof}

Proposition~\ref{prop:idealized_same_projector} should be viewed as an idealized comparison that removes projector mismatch and isolates the effect of averaging within $\hat{\mathbf{u}}_{n,\mathrm{CM}}$. Then, the comparison is governed by how the averaging-induced deviation interacts with the residual overlap left under posterior sampling. In particular, \eqref{eq:cross_nonnegative_same_projector} requires that the contribution from the centroidal deviation $\mathbf{d}_n(\mathbf{m})$ does not cancel the leakage associated with the selected-mode component, so that averaging does not benefit from destructive interaction. Moreover, \eqref{eq:local_condition_same_projector} requires that the resulting centroidal-mismatch term $\mathcal{C}_{k,n}^{\mathrm{CM}}(\mathbf{m})$ be large enough to dominate the residual-overlap term $\mathcal{R}_{k,n}^{\mathrm{PS}}(\mathbf{m})$ even after accounting for the relative projected energies $\eta_{n,\mathrm{CM}}$ and $\eta_{n,\mathrm{PS}}(\mathbf{m})$. 
Thus, posterior sampling is favored when $\mathbf f_n(\mathbf m)$ has small residual overlap with the non-selected posterior modes, while conditional-mean averaging is pulled far enough away from the selected mode that the additional leakage caused by the geometry-collapsing deviation $\mathbf d_n(\mathbf m)$ outweighs any compensating separation advantage.
This analysis helps explain why Euclidean conditional-mean averaging can be poorly matched to ZF interference geometry under multimodal posteriors, and why the proposed flow-based decoders can mitigate this issue by avoiding MSE-oriented centroid collapse.

\section{Numerical Results}
\label{sec:numerical_results}
In this section, we evaluate the proposed flow-matching generative models for CSI feedback in MU-MIMO precoding. 
Both proposed methods use the encoder of the Uniform Enc-Dec (MSE) baseline introduced in Section~\ref{subsec:baseline_methods}. 
The flow-based CSI refiner additionally uses its decoder $D_0(\cdot)$ to obtain the front-end reconstruction.

The proposed flow-based decoders use a four-level 1D U-Net vector-field network with a base channel width of $64$.
The numbers of residual blocks per downsampling and upsampling block, $N_{\mathrm{down}}$ and $N_{\mathrm{up}}$, are set to $1$ and $2$, respectively.
The flow-time embedding uses $\sigma_f=16$.
For inference, the flow ODE is solved by the midpoint method using only $N_{\mathrm{step}}=4$ flow steps, and EMA weights with decay $0.999$ are used.

\subsection{Dataset Specifications}

The simulation dataset is generated using the QuaDRiGa simulator \cite{Jaeckel23} based on the 3GPP 38.901 UMi model \cite{ETSI20}. First, 100{,}000 individual UE channel realizations are generated. Of these, 60{,}000 realizations are used for training. The remaining 40{,}000 realizations are reserved for testing, from which 5{,}000 multiuser channel sets are constructed by randomly selecting a predetermined number $K$ of UE channels for each set.
The carrier frequency is set to 7 GHz, representing the upper midband in FR3. Key simulation parameters, including cell size, base station geometry, channel model, and antenna configurations, are summarized in Table~\ref{tab:dataset_specifications}.

\begin{table}[t]
\centering
\caption{Simulation Dataset Specifications}
\label{tab:dataset_specifications}
\renewcommand{\arraystretch}{1.08}
\setlength{\tabcolsep}{6pt}
\begin{tabular}{ll}
\toprule
\textbf{Parameter} & \textbf{Value} \\
\midrule
Cell type            & Single cell \\
Cell radius          & 100 m \\
BS position          & $(0,0,10)$ m \\
Channel model        & UMi in TR 38.901 \cite{ETSI20} \\
Carrier frequency    & 7 GHz \\
Simulation bandwidth & Narrowband \\
BS antenna setup     & UPA, $4 \times 8$ \\
\bottomrule
\end{tabular}
\end{table}

\subsection{Baseline Methods}
\label{subsec:baseline_methods}

We compare the proposed methods with deterministic CSI feedback baselines, named according to their quantization rule, encoder-decoder architecture, and training objective.

\textbf{Uniform Enc-Dec (MSE)}: This baseline employs a CSINet-style autoencoder \cite{Wen18} featuring uniform quantization, where the non-differentiable quantization operation is approximated via the straight-through estimator (STE) \cite{Bengio13}. Both the encoder and decoder are trained end-to-end to minimize the MSE, serving as the fundamental deterministic reconstruction baseline.

\textbf{Uniform Enc + U-Net Dec (MSE)}: Retaining the same uniform encoder as in the previous baseline, this method replaces the decoder with a U-Net architecture matched to that used in the proposed direct flow decoder. Unlike the proposed method, however, the decoder is trained purely as a deterministic regressor under the MSE objective.

\textbf{Uniform Enc-Dec (Chordal)}: Sharing the same autoencoder architecture and uniform quantization as the Uniform Enc-Dec (MSE) baseline, this scheme is trained with a chordal distance minimization objective. 
Motivated by Theorem~\ref{thm:chordal_bayes_opt}, this sample-based chordal loss can be viewed as a Monte Carlo approximation of the posterior-optimal self-alignment criterion within each feedback cell.

\textbf{$\mu$-law Enc-Dec (MSE)} \cite{Guo20}: This approach substitutes uniform quantization with $\mu$-law quantization while maintaining deterministic end-to-end reconstruction under the MSE objective. Practically corresponding to the CSINet+ framework \cite{Guo20}, it serves as a representative baseline for nonuniform quantization strategies.

\textbf{Per-Lat Quant Enc-Dec (MSE)} \cite{Yin25}: This baseline implements an autoencoder-based latent model equipped with learnable, latent-wise quantization levels. Offering a more sophisticated deterministic quantization design, it provides a critical benchmark to verify that the proposed improvements derive from generative decoding rather than merely from enhanced flexibility in latent quantization.

\begin{figure}[t]
\centering
\subfigure[$K=2$.]{
    \includegraphics
    [width=3.4in]{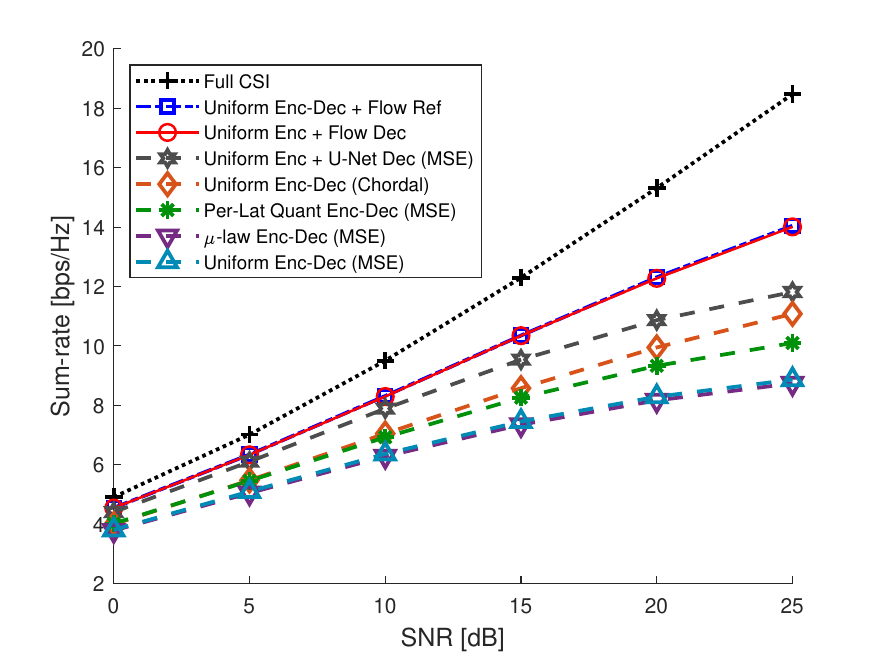}
    \label{fig:Sumrate_lat8_b4_K2}}
\vspace{0.5em}
\subfigure[$K=8$.]{
    \includegraphics
    [width=3.4in]
    {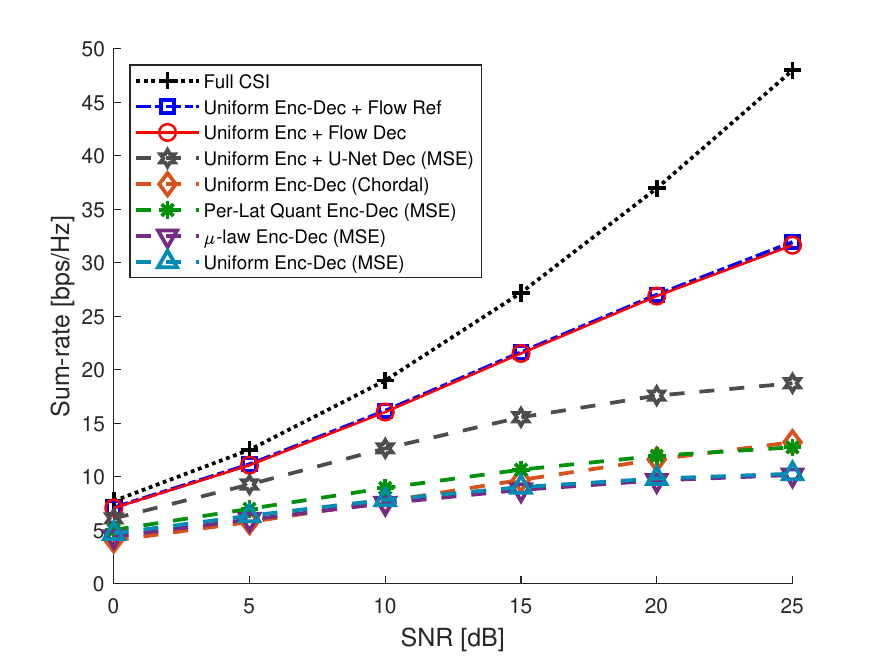}
    \label{fig:Sumrate_lat8_b4_K8}}
\caption{Downlink sum-rate versus SNR for the number of UEs $K\in\{2,8\}$ and latent dimension $L=8$, with $32$ feedback bits per user $(4$ bits per latent element$)$.}
\label{fig:Sumrate_lat8_b4}
\end{figure}

\subsection{Performance Evaluation}
\textbf{The advantage of the proposed methods becomes more pronounced in the interference-limited regime.}
Fig.~\ref{fig:Sumrate_lat8_b4} shows the downlink sum-rate for $K\in\{2,8\}$ with 32 feedback bits per user. The two proposed flow-based methods consistently outperform all deterministic baselines, with the performance gap widening substantially at high SNR, especially for $K=8$. In these interference-limited regimes, ZF precoding is highly sensitive to residual directional mismatches. Notably, even with the same U-Net backbone, the MSE-trained baseline suffers from clear sum-rate saturation. This indicates that the gains of the proposed methods are not due merely to a stronger decoder architecture, but fundamentally to the flow-based generative objective, which better preserves the spatial geometry needed for effective interference suppression.

\begin{figure}[t]
\begin{center}
\includegraphics
[width=3.4in]
{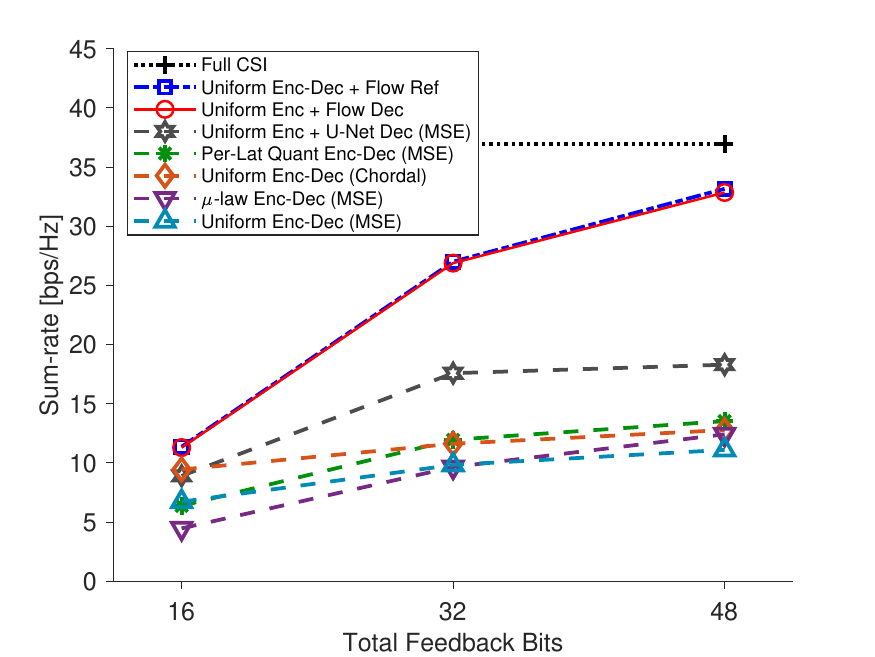}
\end{center}
\caption{Downlink sum-rate versus total feedback bits per user for the number of UEs $K=8$ and latent dimension $L=8$ at SNR $=20$ dB.}
\label{fig:sumrate_vs_bits}
\end{figure}

\textbf{The proposed methods use increased feedback capacity effectively.}
Fig.~\ref{fig:sumrate_vs_bits} illustrates that the proposed flow-based methods achieve the highest sum-rate across all feedback budgets, closely approaching the Full CSI benchmark at $48$ bits. 
Equivalently, both flow-based decoders with $32$ feedback bits per user already outperform the deterministic baselines with $48$ bits per user, indicating at least a $16$-bit-per-user reduction in feedback for the same or higher sum-rate. 
While the flow models exhibit substantial performance scaling as the feedback budget increases from $16$ to $48$ bits, the MSE-based baselines quickly saturate and benefit only marginally.

\begin{figure}[t]
\begin{center}
\includegraphics
[width=3.4in]
{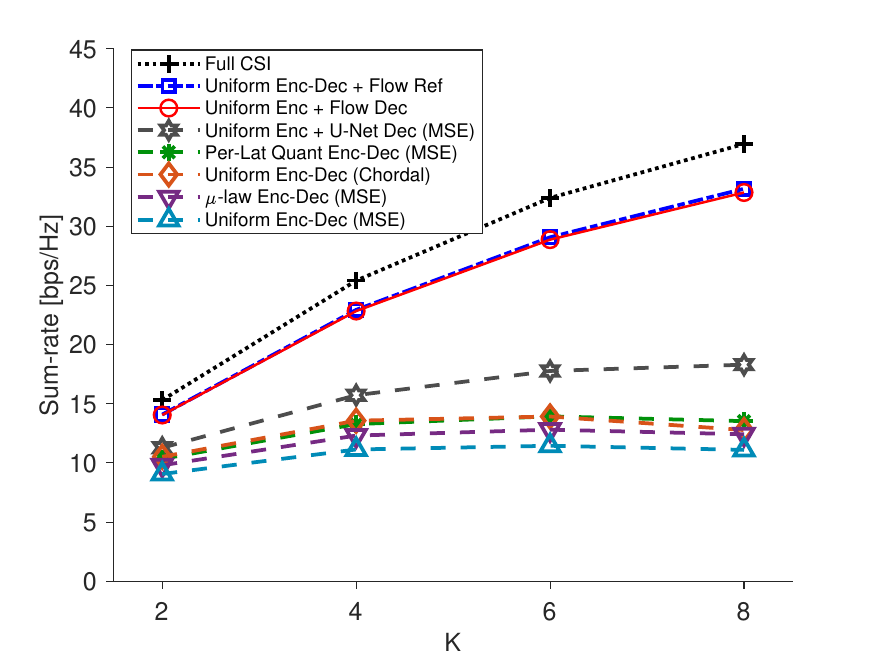}
\end{center}
\caption{Downlink sum-rate versus the number of UEs $K$ with latent dimension $L=8$, $48$ feedback bits per user, and SNR $=20$ dB.}
\label{fig:sumrate_vs_K}
\end{figure}

\textbf{The proposed methods remain highly effective in spatially dense regimes.}
Fig.~\ref{fig:sumrate_vs_K} shows that the performance gap between the proposed methods and the deterministic baselines widens significantly as the number of UEs $K$ increases. 
At $K=8$, the proposed methods achieve about $2.4$ times the sum-rate of the Per-Lat Quant Enc-Dec (MSE) baseline at the same feedback budget, while reaching about $89\%$ of the Full CSI benchmark.
This widening margin reflects the increasing spatial orthogonality constraints in large-$K$ regimes, where ZF precoding becomes especially sensitive to residual directional mismatches.

\begin{table}[t] 
\centering
\caption{NMSE $(dB)$ of CSI feedback methods under feedback budgets of 16, 32, and 48 bits in the UMi setup.}
\label{tab:nmse_bits_comparison}
\renewcommand{\arraystretch}{1.08}
\begin{tabular}{lccc}
\toprule
Method & 16 bits & 32 bits & 48 bits \\
\midrule
\textbf{Uniform Enc-Dec + Flow Ref}   & 2.06  & -1.01 & -4.74 \\
\textbf{Uniform Enc + Flow Dec}  & 1.49  & -0.90 & -4.82 \\
Uniform Enc + U-Net Dec (MSE) & -0.37  & -3.72 & -4.66 \\
Uniform Enc-Dec (Chordal)   & 1.73  & -2.31 & -4.25 \\
Per-Lat Quant Enc-Dec (MSE) & \textbf{-0.58} & \textbf{-3.55} & \textbf{-4.84} \\
$\mu$-law Enc-Dec (MSE)         & 2.25  & -2.09 & -3.42 \\
Uniform Enc-Dec (MSE)       & 1.02  & -0.56 & -2.37 \\
\bottomrule
\end{tabular}
\end{table}

\textbf{Lower NMSE does not necessarily imply higher sum-rate.}
Table~\ref{tab:nmse_bits_comparison} shows that the proposed flow-based methods do not consistently achieve the lowest average NMSE. Deterministic baselines trained under the MSE objective, especially Per-Lat Quant Enc-Dec, achieve a lower average reconstruction error. However, together with Fig.~\ref{fig:Sumrate_lat8_b4}, the table reveals a clear mismatch between average NMSE and MU-MIMO utility, since lower NMSE does not necessarily lead to higher downlink sum-rate. Under finite-rate feedback, MSE-oriented decoding tends to average over multiple plausible channel realizations mapped to the same bits, smoothing the directional structure needed for effective ZF interference suppression. In contrast, the proposed flow-matching methods may accept a slightly higher NMSE while producing reconstructions that better preserve this precoding-relevant geometry, which explains their superior sum-rate performance.

\begin{figure}[t]
\centering
\includegraphics
[width=3.4in]
{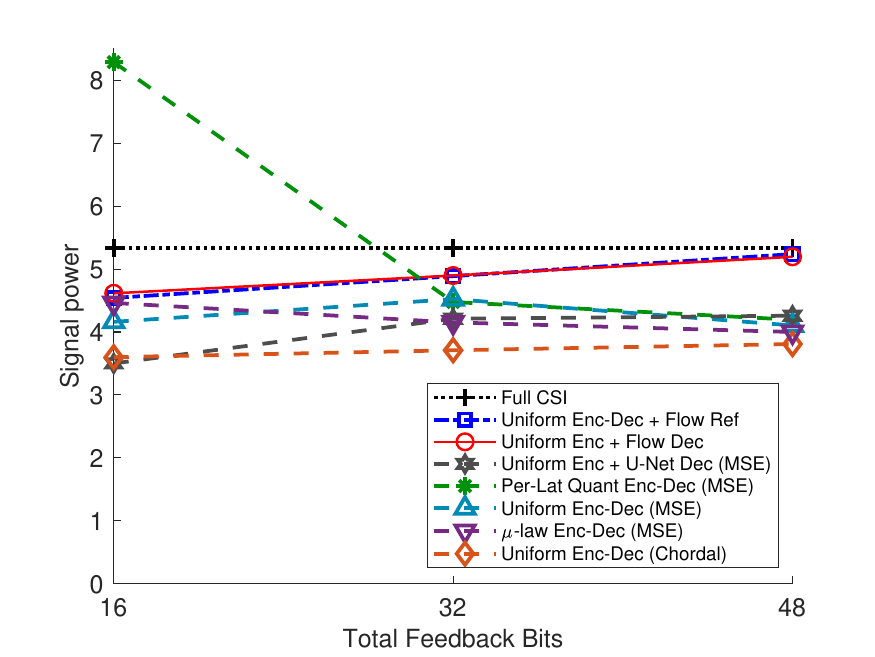}
\caption{Aggregate desired-signal term versus total feedback bits per user for $K=8$, latent dimension $L=8$, and SNR $=20$ dB.}
\label{fig:signal_power_vs_bits}
\end{figure}

\begin{figure}[t]
\centering
\includegraphics
[width=3.4in]
{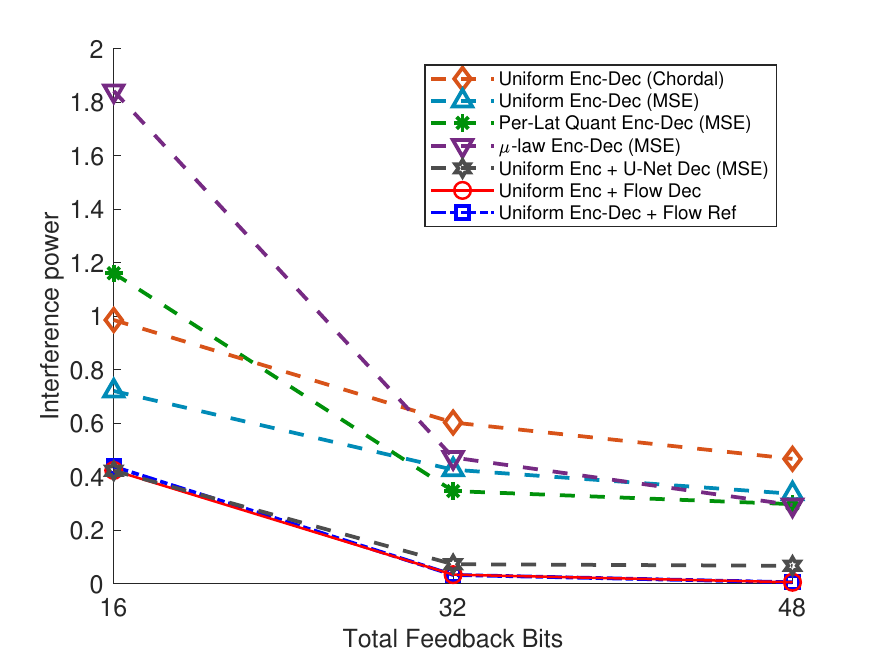}
\caption{Aggregate interference term versus total feedback bits per user for $K=8$, latent dimension $L=8$, and SNR $=20$ dB.}
\label{fig:interference_power_vs_bits}
\end{figure}

\textbf{Desired-signal gain increases with feedback, while the MSE-based baselines saturate.}
Fig.~\ref{fig:signal_power_vs_bits} shows the aggregate desired-signal term obtained by summing the per-user quantity in \eqref{eq:desired_signal} across the $K$ scheduled users, as the total feedback bits increase from $16$ to $48$.
For the proposed flow-based methods, the desired-signal term increases steadily with the feedback budget and approaches Full CSI at $48$ bits, while remaining about $23\%$ higher than that of the strongest MSE-based baseline. The MSE-based baselines exhibit much weaker gains and nearly saturate. 
It is also notable that the Per-Lat Quant baseline yields an aggregate desired-signal term larger than Full CSI at $16$ bits. 
This does not imply a better precoder, since Full CSI ZF is not an upper bound on the desired-signal term alone. 
A beam designed from inaccurate CSI can increase the diagonal projection onto the true channel while violating the true inter-user nulling constraints, leading to much larger residual interference and lower sum-rate.

\textbf{Total interference reduction is the dominant source of sum-rate gain.}
Fig.~\ref{fig:interference_power_vs_bits} shows the corresponding aggregate interference term in \eqref{eq:zf_interference_general}, summed over all user-interference pairs. At $48$ bits, the proposed flow-based methods reduce the total interference to about $0.0055$, whereas the baselines remain much higher, roughly in the range of $0.067$ to $0.47$, corresponding to about a $92$--$99\%$ reduction. Thus, although the proposed methods also improve the desired-signal term, their much larger sum-rate gain is explained primarily by the sharp reduction in total interference. In contrast, the baselines remain limited by somewhat lower aggregate desired-signal power and, more importantly, by much larger interference, indicating that their reconstructed channel directions do not preserve the spatial geometry needed for effective ZF interference suppression.

\begin{figure}[t] 
\centering
\subfigure[$16$ bits per channel. NMSE (dB): Uniform Enc + Flow Dec $=-2.30$, Uniform Enc-Dec + Flow Ref $=-1.53$, Per-Lat Quant Enc-Dec (MSE) $=0.49$, Uniform Enc-Dec (MSE) $=-2.01$.]{
    \includegraphics
    [width=3.4in]
    {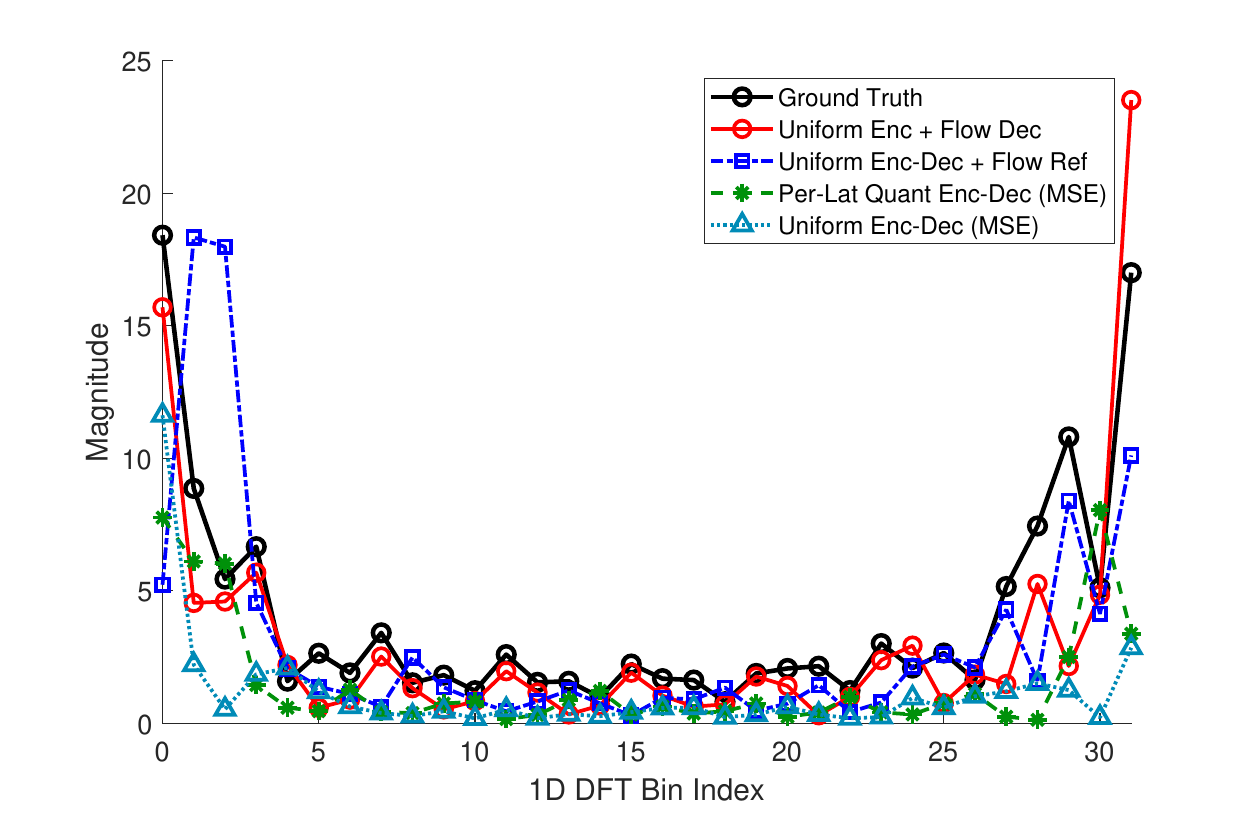}
    \label{fig:fft_mag_example_16bits}}
\vspace{0.5em}
\subfigure[$48$ bits per channel. NMSE (dB): Uniform Enc + Flow Dec $=-32.42$, Uniform Enc-Dec + Flow Ref $=-34.98$, Per-Lat Quant Enc-Dec (MSE) $=-5.96$, Uniform Enc-Dec (MSE) $=-7.99$.]{
    \includegraphics
    [width=3.4in]
    {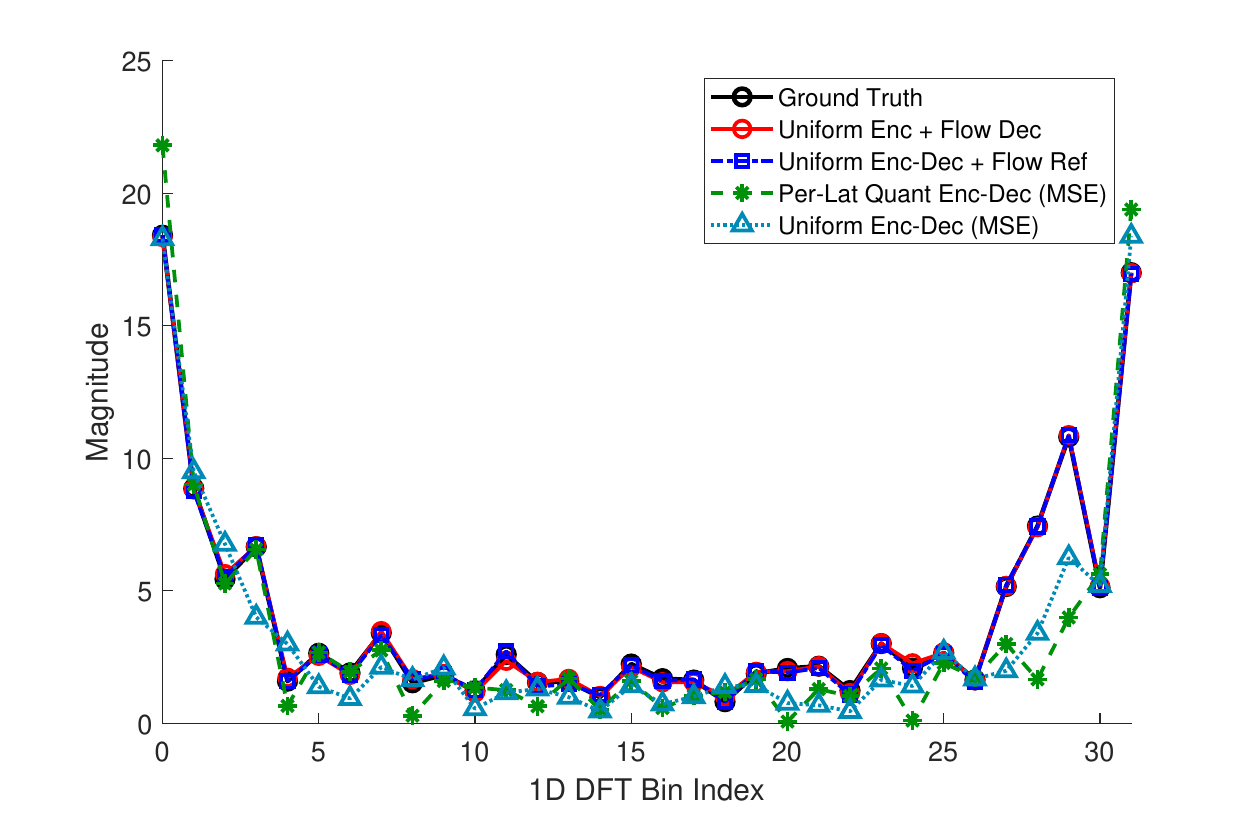}
    \label{fig:fft_mag_example_48bits}}
\caption{Representative 1D DFT-domain magnitude profiles of a channel sample after applying a 1D DFT.}
\label{fig:fft_mag_examples}
\end{figure}

\textbf{Conservative MSE-based reconstruction versus sharper flow-based recovery.}
Fig.~\ref{fig:fft_mag_examples} shows representative 1D discrete Fourier transform (DFT) domain magnitude profiles reconstructed by the different methods. 
At $16$ bits per channel, the MSE-based baselines tend to smooth out sharp dominant peaks and weaken localized DFT-domain components of the ground-truth channel. 
This is consistent with the limited representation capability under such a restrictive feedback budget. 
When the available bits are insufficient to specify the channel sharply, MSE-based training typically favors an averaged reconstruction that minimizes the expected error over the posterior uncertainty, rather than committing to a sharper estimate that might incur a large mismatch penalty. 
In contrast, the proposed flow-based methods preserve the dominant peak structures more effectively and follow the ground-truth profile much more closely, despite some mismatch. 
At $48$ bits per channel, the proposed flow models track the ground truth almost perfectly, whereas the MSE baselines still struggle to reproduce the sharp DFT-domain structures accurately.

\textbf{A similar limitation has been observed in computer vision.}
These visual characteristics help contextualize Table~\ref{tab:nmse_bits_comparison}. Although deterministic MSE training can yield competitive average NMSE, it often does so by smoothing out sharp DFT-domain structures. 
This behavior is analogous to recurring observations in computer vision, where MSE-based objectives often favor blurry reconstructions \cite{Mathieu16, Pathak16, Blau18}. To counter this averaging tendency, earlier work often introduced explicit perceptual objectives. More recently, however, generative models such as diffusion and flow-based models have shown that sharper outputs can emerge from faithful distribution modeling, without the need for an explicit perceptual loss \cite{Ho20, Lipman23}.
Our results suggest a similar lesson for CSI feedback. Even without using the MU-MIMO sum-rate directly as a training loss, the proposed flow-matching decoder can better preserve the spatial channel geometry relevant to ZF precoding by avoiding the conservative averaging behavior of MSE-based reconstruction.

\begin{figure}[t] 
\centering
\subfigure[Example of a \textit{CM-failure} cell with code $(7,9,8,7,8,7,7,7)$, where $\Delta_{\mathrm{CM}}=0.2047$ and $\Delta_{\mathrm{flow}}=0.0851$.]{
    \includegraphics
    [width=3.4in]    {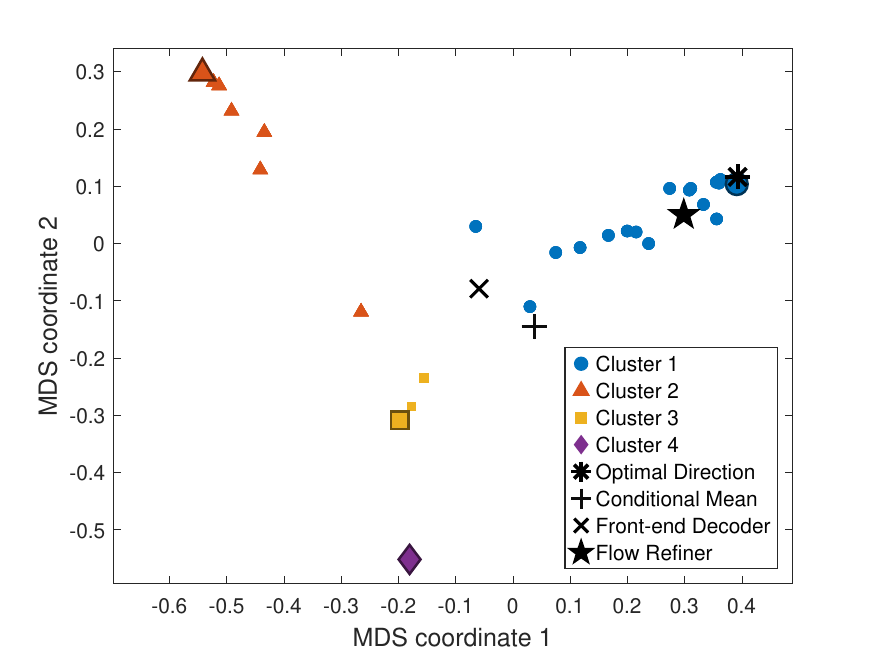}
    \label{fig:poster_shape_flowdec_win}}
\vspace{0.5em}
\subfigure[Example of a \textit{flow-failure} cell with code $(8,8,7,7,8,7,8,8)$, where $\Delta_{\mathrm{CM}}=0.0160$ and $\Delta_{\mathrm{flow}}=0.1449$.]{
    \includegraphics
    [width=3.4in]
    {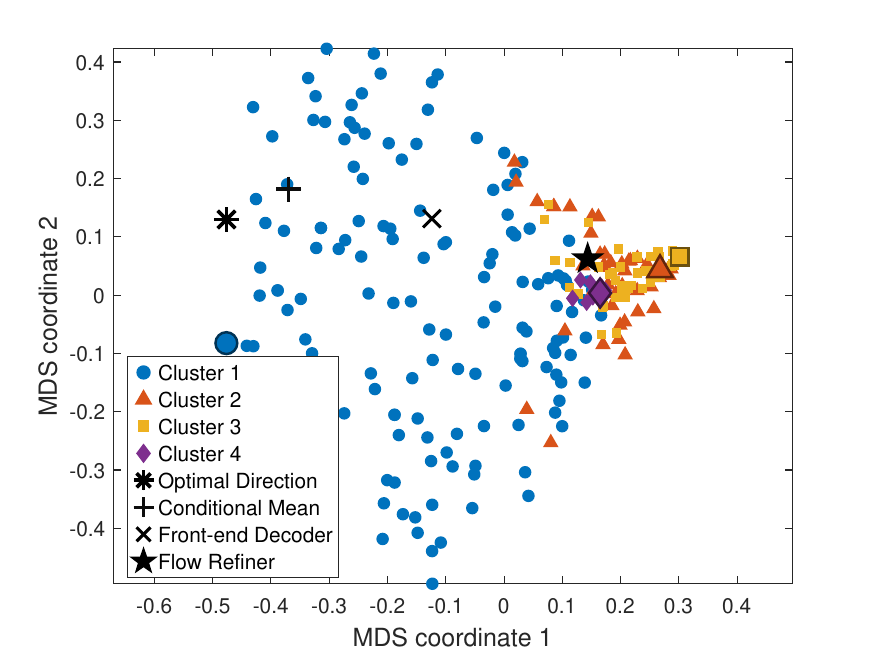}
    \label{fig:poster_shape_flowdec_cmwin}}
\caption{Representative feedback cells for the proposed flow refiner at latent dimension $8$ and $4$ bits per latent element.}
\label{fig:poster_shape_flowdec_examples}
\end{figure}

\textbf{The effectiveness of flow-based decoding depends on the encoder-defined posterior.}

We finally inspect individual feedback cells, defined in \eqref{eqn:def_fb_cell}, to illustrate when posterior-guided decoding helps or hurts. For a feedback vector $\mathbf{b}$, we collect empirical channel directions mapped to the same feedback cell and visualize them using two-dimensional multidimensional scaling (MDS) based on chordal distances, where the MDS coordinates are used only for visualization. 
For a reconstructed direction $\hat{\mathbf{u}}$, we define the gap from the posterior-optimal self-alignment direction as $\Delta(\hat{\mathbf{u}})=\lambda_{\max}-\hat{\mathbf{u}}^H\mathbf{R}(\mathbf{b})\hat{\mathbf{u}}$, where $\lambda_{\max}$ is defined in Theorem~\ref{thm:chordal_bayes_opt}.
Fig.~\ref{fig:poster_shape_flowdec_examples} shows two representative cells for the flow refiner at latent dimension $L=8$ with $4$ bits per latent element. 
In Fig.~\ref{fig:poster_shape_flowdec_win}, the conditional-mean direction lies between visible posterior clusters, while the flow refiner moves closer to the dominant posterior structure, reducing the gap from $\Delta_{\mathrm{CM}}=0.2047$ to $\Delta_{\mathrm{flow}}=0.0851$. 
In Fig.~\ref{fig:poster_shape_flowdec_cmwin}, the opposite behavior occurs: the conditional-mean direction remains closer to the posterior-optimal direction, whereas the flow refiner is pulled toward a weaker mode, increasing the gap from $\Delta_{\mathrm{CM}}=0.0160$ to $\Delta_{\mathrm{flow}}=0.1449$. 
These examples show that flow-based decoding is not guaranteed to improve every feedback cell; its benefit depends on the posterior geometry shaped by the encoder.

\section{Conclusion} \label{sec:conclusion}
This paper showed that effective finite-rate CSI feedback for MU-MIMO precoding depends on preserving posterior geometry rather than relying on MSE-oriented Euclidean averaging. In particular, averaging-based reconstruction can be poorly matched to MU-MIMO precoding, since it may collapse spatially meaningful channel structure into compromise representatives that are less useful for interference suppression. This viewpoint helps explain why lower NMSE does not necessarily result in higher downlink sum-rate.
Important directions for future work include extending the proposed framework and analysis to multi-antenna UEs and exploring UE-side encoder design.

\begin{appendices}

\section{Proof of Proposition~\ref{prop:idealized_same_projector}}
\label{app:proof_idealized_same_projector}

Under \eqref{eq:same_projector_assumption}, we have
$\mathbf{\Pi}_{n,\mathrm{CM}}\mathbf{v}_k(m_k)=\mathbf{0}$,
since
$\mathbf{v}_k(m_k)\in\mathcal{U}_n(\mathbf{m})$.
Hence,
$\mathcal{S}_{k,n}^{\mathrm{CM}}(\mathbf{m})=0$.
Moreover, under the same assumption,
$a_{k,n,\ell}^{\mathrm{CM}}(\mathbf{m}) =
\mathbf{v}_k(\ell)^H \mathbf{\Pi}_n(\mathbf{m}) \mathbf{v}_n(m_n)$,
and therefore
\begin{equation*}
\sum_{\ell\neq m_k}
p_k(\ell) \left| a_{k,n,\ell}^{\mathrm{CM}}(\mathbf{m})\right|^2 =
\mathcal{R}_{k,n}^{\mathrm{PS}}(\mathbf{m}).
\end{equation*}

Substituting
$\mathcal{S}_{k,n}^{\mathrm{CM}}(\mathbf{m})=0$
into \eqref{eq:local_cm_exact} gives
\begin{align*}
\mathcal{I}_{k,n}^{\mathrm{CM}}
&= \frac{1}{\eta_{n,\mathrm{CM}}} \sum_{\ell\neq m_k}
p_k(\ell) \left| c_n(\mathbf{m})\,a_{k,n,\ell}^{\mathrm{CM}}(\mathbf{m}) + b_{k,n,\ell}^{\mathrm{CM}}(\mathbf{m}) \right|^2 \\
&= \frac{1}{\eta_{n,\mathrm{CM}}} \Bigg( c_n(\mathbf{m})^2
\sum_{\ell\neq m_k} p_k(\ell) \left| a_{k,n,\ell}^{\mathrm{CM}}(\mathbf{m}) \right|^2 +
\mathcal{C}_{k,n}^{\mathrm{CM}}(\mathbf{m})\\
&\qquad + 2c_n(\mathbf{m}) \sum_{\ell\neq m_k} p_k(\ell)
\Re\!\left[ \left( a_{k,n,\ell}^{\mathrm{CM}}(\mathbf{m}) \right)^* b_{k,n,\ell}^{\mathrm{CM}}(\mathbf{m}) \right] \Bigg).
\end{align*}
By \eqref{eq:cross_nonnegative_same_projector} and the identity above,
\begin{equation*}
\mathcal{I}_{k,n}^{\mathrm{CM}}
\ge
\frac{ c_n(\mathbf{m})^2
\mathcal{R}_{k,n}^{\mathrm{PS}}(\mathbf{m}) +
\mathcal{C}_{k,n}^{\mathrm{CM}}(\mathbf{m})}{\eta_{n,\mathrm{CM}}}.
\end{equation*}

On the other hand, by \eqref{eq:local_ps_exact},
$
\mathcal{I}_{k,n}^{\mathrm{PS}}(\mathbf{m}) = \frac{\mathcal{R}_{k,n}^{\mathrm{PS}}(\mathbf{m})}{
\eta_{n,\mathrm{PS}}(\mathbf{m})}.
$
Hence, if \eqref{eq:local_condition_same_projector} holds, then
\begin{equation*}
c_n(\mathbf{m})^2 \mathcal{R}_{k,n}^{\mathrm{PS}}(\mathbf{m}) +
\mathcal{C}_{k,n}^{\mathrm{CM}}(\mathbf{m}) >
\frac{\eta_{n,\mathrm{CM}}}{\eta_{n,\mathrm{PS}}(\mathbf{m})} \mathcal{R}_{k,n}^{\mathrm{PS}}(\mathbf{m}),
\end{equation*}
which implies
$\mathcal{I}_{k,n}^{\mathrm{CM}} >
\mathcal{I}_{k,n}^{\mathrm{PS}}(\mathbf{m})$.

\end{appendices}

\bibliographystyle{ieeetr}
\begingroup
\bibliography{JuseongREF}

@ARTICLE{Sohrabi21,
  author={Sohrabi, Foad and Attiah, Kareem M. and Yu, Wei},
  journal={IEEE Trans. Wireless Commun.}, 
  title={Deep Learning for Distributed Channel Feedback and Multiuser Precoding in {FDD} Massive {MIMO}}, 
  month={Jul.},
  year={2021},
  volume={20},
  number={7},
  pages={4044-4057},
  doi={10.1109/TWC.2021.3055202}}

@inproceedings{
  Welker25,
  title={{FlowDec}: A flow-based full-band general audio codec with high perceptual quality},
  author={Simon Welker and Matthew Le and Ricky T. Q. Chen and Wei-Ning Hsu and Timo Gerkmann and Alexander Richard and Yi-Chiao Wu},
  booktitle={Int. Conf. Learn. Represent. (ICLR)},
  year={2025},
  month={Apr.},
}

@inproceedings{Lipman23,
  author    = {Yaron Lipman and Ricky T. Q. Chen and Heli Ben-Hamu and Maximilian Nickel and Matthew Le},
  title     = {Flow Matching for Generative Modeling},
  booktitle = {Proc. Int. Conf. Learn. Represent., (ICLR)},
  year      = {2023},
  month     = may,
}

@inproceedings{Song21,
  author    = {Yang Song and Jascha Sohl-Dickstein and Diederik P. Kingma and Abhishek Kumar and Stefano Ermon and Ben Poole},
  title     = {Score-Based Generative Modeling through Stochastic Differential Equations},
  booktitle = {Proc. Int. Conf. Learn. Represent. (ICLR)},
  year      = {2021}
}

@inproceedings{Ho20,
  title={Denoising diffusion probabilistic models},
  author={Ho, Jonathan and Jain, Ajay and Abbeel, Pieter},
  booktitle={Proc. Adv. Neural Inf. Process. Syst. (NeurIPS)},
  volume={33},
  pages={6840--6851},
  year={2020}
}

@book{Kay93,
  author    = {Steven M. Kay},
  title     = {Fundamentals of Statistical Signal Processing: Estimation Theory},
  publisher = {Prentice-Hall},
  address   = {Upper Saddle River, NJ, USA},
  year      = {1993},
  isbn      = {978-0-13-345711-7}
}

@ARTICLE{Yin25,
  author={Yin, Manru and Han, Shengqian and Yang, Chenyang},
  journal={IEEE Wireless Commun. Lett.}, 
  title={Quantization Design for Deep Learning-Based {CSI} Feedback}, 
  year={2025},
  month={Aug.},
  volume={14},
  number={8},
  pages={2411-2415},
  doi={10.1109/LWC.2025.3570965}}

@misc{ETSI20,
  author = {{ETSI}},
  title = {Study on channel model for frequencies from 0.5 to 100 {GHz}},
  howpublished = {ETSI TR 138 901, V16.1.0},
  month = nov,
  year = {2020},
  url = {https://www.etsi.org/deliver/etsi_tr/138900_138999/138901/16.01.00_60/tr_138901v160100p.pdf}
}

@ARTICLE{Guo20,
  author={Guo, Jiajia and Wen, Chao-Kai and Jin, Shi and Li, Geoffrey Ye},
  journal={IEEE Trans. Wireless Commun.}, 
  title={Convolutional Neural Network-Based Multiple-Rate Compressive Sensing for Massive {MIMO CSI} Feedback: Design, Simulation, and Analysis}, 
  year={2020},
  month={Apr.},
  volume={19},
  number={4},
  pages={2827-2840},
  doi={10.1109/TWC.2020.2968430}}

@ARTICLE{Wen18,
  author={Wen, Chao-Kai and Shih, Wan-Ting and Jin, Shi},
  journal={IEEE Wireless Commun. Lett.}, 
  title={Deep Learning for Massive {MIMO CSI} Feedback}, 
  year={2018},
  month={Oct.},
  volume={7},
  number={5},
  pages={748-751},
  doi={10.1109/LWC.2018.2818160}}

@misc{Bengio13,
  author       = {Yoshua Bengio and
                  Nicholas L{\'{e}}onard and
                  Aaron C. Courville},
  title        = {Estimating or Propagating Gradients Through Stochastic Neurons for Conditional Computation},
  year={2013},
  eprint={1308.3432},
  archivePrefix={arXiv},
  note          = {[Online]. Available: \url{http://arxiv.org/abs/1308.3432}},}

@TechReport{Jaeckel23,
  author={Jaeckel, Stephan and Raschkowski, Leszek and Börner, Kai and Thiele, Lars},
  title={{QuaDRiGa} — {Quasi} deterministic radio channel generator, user manual and documentation}, 
  institution = {Fraunhofer Heinrich Hertz Institute},
  year      = {2023},
  number    = {v2.8.1},
}

@ARTICLE{Guo22,
  author={Guo, Jiajia and Wen, Chao-Kai and Jin, Shi and Li, Geoffrey Ye},
  journal={IEEE Trans. Commun.}, 
  title={Overview of Deep Learning-Based {CSI} Feedback in Massive {MIMO} Systems}, 
  year={2022},
  month={Dec.},
  volume={70},
  number={12},
  pages={8017-8045},
  doi={10.1109/TCOMM.2022.3217777}}

@ARTICLE{Guo25,
  author={Guo, Yiran and Chen, Wei and Sun, Feifei and Cheng, Jiaming and Matthaiou, Michail and Ai, Bo},
  journal={IEEE Commun. Mag.}, 
  title={Deep Learning for {CSI} Feedback: One-Sided Model and Joint Multi-Module Learning Perspectives}, 
  year={2025},
  month={Jul.},
  volume={63},
  number={7},
  pages={90-97},
  doi={10.1109/MCOM.001.2400285}}

@ARTICLE{Xingqin25,
  author={Lin, Xingqin},
  journal={IEEE Commun. Standards Mag.}, 
  title={A Tale of Two Mobile Generations: {5G-Advanced} and {6G} in {3GPP} {Release} 20}, 
  year={2026},
  volume={},
  number={},
  pages={},
  note={\emph{to appear}},
  doi={10.1109/MCOMSTD.2025.3613219}}

@INPROCEEDINGS{Kim25,
  author={Kim, Heasung and Lee, Taekyun and Kim, Hyeji and De Veciana, Gustavo and Arfaoui, Mohamed Amine and Koc, Asil and Pietraski, Phil and Zhang, Guodong and Kaewell, John},
  booktitle={Proc. IEEE Int. Conf. Commun. (ICC)}, 
  title={Generative Diffusion Model-Based Compression of {MIMO CSI}}, 
  year={2025},
  volume={},
  number={},
  pages={6323-6328},
  doi={10.1109/ICC52391.2025.11161629}}

@ARTICLE{Ankireddy26,
  author={Ankireddy, Sravan Kumar and Kim, Heasung and Cho, Joonyoung and Kim, Hyeji},
  journal={IEEE J. Sel. Areas Commun.}, 
  title={Residual Diffusion Models for Variable-Rate Joint Source–Channel Coding of {MIMO CSI}}, 
  year={2026},
  volume={44},
  number={},
  pages={3620-3633},
  doi={10.1109/JSAC.2026.3666741}}

@ARTICLE{Park25,
  author={Park, Juseong and Sohrabi, Foad and Ghosh, Amitava and Andrews, Jeffrey G.},
  journal={IEEE Trans. Wireless Commun.}, 
  title={End-to-End Deep Learning for {TDD MIMO} Systems in the {6G} Upper Midbands}, 
  year={2025},
  month={Mar.},
  volume={24},
  number={3},
  pages={2110-2125},
  doi={10.1109/TWC.2024.3516633}}

@ARTICLE{Jang22,
  author={Jang, Jeonghyeon and Lee, Hoon and Kim, Il-Min and Lee, Inkyu},
  journal={IEEE Trans. Commun.}, 
  title={Deep Learning for Multi-User {MIMO} Systems: Joint Design of Pilot, Limited Feedback, and Precoding}, 
  month={Nov.},
  year={2022},
  volume={70},
  number={11},
  pages={7279-7293},
  doi={10.1109/TCOMM.2022.3209887}}

@ARTICLE{Lu20_uniform,
  author={Lu, Chao and Xu, Wei and Jin, Shi and Wang, Kezhi},
  journal={IEEE Wireless Commun. Lett.}, 
  title={Bit-Level Optimized Neural Network for Multi-Antenna Channel Quantization}, 
  year={2020},
  month={Jan.},
  volume={9},
  number={1},
  pages={87-90},
  doi={10.1109/LWC.2019.2942908}}

@ARTICLE{Rizzello23,
  author={Rizzello, Valentina and Nerini, Matteo and Joham, Michael and Clerckx, Bruno and Utschick, Wolfgang},
  journal={IEEE Wireless Commun. Lett.}, 
  title={User-Driven Adaptive {CSI} Feedback With Ordered Vector Quantization}, 
  year={2023},
  month={Nov.},
  volume={12},
  number={11},
  pages={1956-1960},
  doi={10.1109/LWC.2023.3301992}}

@inproceedings{Mathieu16,
  author={Mathieu, Michael and Couprie, Camille and LeCun, Yann},
  title={Deep multi-scale video prediction beyond mean square error},
  booktitle={Proc. Int. Conf. Learn. Represent. (ICLR)},
  year={2016},
  month={May},
  pages={1--14}}

@InProceedings{Pathak16,
author = {Pathak, Deepak and Krahenbuhl, Philipp and Donahue, Jeff and Darrell, Trevor and Efros, Alexei A.},
title = {Context Encoders: Feature Learning by Inpainting},
booktitle={Proc. IEEE Conf. Comput. Vis. Pattern Recognit. (CVPR)},
month = {Jun.},
year = {2016},
pages={2536--2544}}

@inproceedings{Blau18,
  title={The perception-distortion tradeoff},
  author={Blau, Yochai and Michaeli, Tomer},
  booktitle={Proc. IEEE/CVF Conf. Comput. Vis. Pattern Recognit. (CVPR)},
  month={Jun.},
  year={2018},
  pages={6228--6237}
}

@ARTICLE{Jindal06,
  author={Jindal, N.},
  journal={IEEE Trans. Inf. Theory}, 
  title={{MIMO} Broadcast Channels With Finite-Rate Feedback}, 
  year={2006},
  month={Nov.},
  volume={52},
  number={11},
  pages={5045-5060},
  doi={10.1109/TIT.2006.883550}}

@ARTICLE{Love03,
  author={Love, D.J. and Heath, R.W. and Strohmer, T.},
  journal={IEEE Trans. Inf. Theory}, 
  title={Grassmannian beamforming for multiple-input multiple-output wireless systems}, 
  year={2003},
  month={Oct.},
  volume={49},
  number={10},
  pages={2735-2747},
  doi={10.1109/TIT.2003.817466}}

@ARTICLE{Ravindran08,
  author={Ravindran, Niranjay and Jindal, Nihar},
  journal={IEEE J. Sel. Areas Commun.}, 
  title={Limited feedback-based block diagonalization for the {MIMO} broadcast channel}, 
  year={2008},
  month={Oct.},
  volume={26},
  number={8},
  pages={1473-1482},
  doi={10.1109/JSAC.2008.081013}}

@book{Heath18, title = {Foundations of MIMO communication}, author =  {R. W. Heath and A. Lozano}, isbn = {}, series = {}, year = {2018}, address = { Cambridge, U.K.},  publisher = {Cambridge Univ. Press}}

@ARTICLE{Love08,
  author={Love, David J. and Heath, Robert W. and N. Lau, Vincent K. and Gesbert, David and Rao, Bhaskar D. and Andrews, Matthew},
  journal={IEEE J. Sel. Areas Commun.}, 
  title={An overview of limited feedback in wireless communication systems}, 
  year={2008},
  month={Oct.},
  volume={26},
  number={8},
  pages={1341-1365},
  doi={10.1109/JSAC.2008.081002}}

@ARTICLE{Zhu26,
  author={Zhu, Lin and Zhu, Weifeng and Zhang, Shuowen and Cui, Shuguang and Liu, Liang},
  journal={IEEE Trans. Wireless Commun.}, 
  title={Scalable Transceiver Design for Multi-User Communication in {FDD} Massive {MIMO} Systems via Deep Learning}, 
  year={2026},
  volume={25},
  number={},
  pages={7682-7697},
  doi={10.1109/TWC.2025.3633144}}

@ARTICLE{Carpi26,
  author={Carpi, Fabrizio and Venkatesan, Sivarama and Du, Jinfeng and Viswanathan, Harish and Garg, Siddharth and Erkip, Elza},
  journal={IEEE Trans. Wireless Commun.}, 
  title={Learned Precoding-Oriented {CSI} Feedback in Multi-Cell Multi-User {MIMO} Systems}, 
  year={2026},
  volume={25},
  number={},
  pages={2359-2372},
  doi={10.1109/TWC.2025.3596140}}

@ARTICLE{Goldsmith03,
  author={Goldsmith, A. and Jafar, S.A. and Jindal, N. and Vishwanath, S.},
  journal={IEEE J. Sel. Areas Commun.}, 
  title={Capacity limits of {MIMO} channels}, 
  year={2003},
  month={Jun.},
  volume={21},
  number={5},
  pages={684-702},
  doi={10.1109/JSAC.2003.810294}}

@ARTICLE{Andrews25,
  author={Andrews, Jeffrey G. and Humphreys, Todd E. and Ji, Tingfang},
  journal={IEEE BITS Inf. Theory Mag.}, 
  title={{6G} Takes Shape}, 
  year={2025},
  month={Mar.},
  volume={4},
  number={1},
  pages={2-24},
  doi={10.1109/MBITS.2024.3504521}}

@ARTICLE{Fan26,
  author={Fan, Dayu and Meng, Rui and Xu, Xiaodong and Liu, Yiming and Nan, Guoshun and Feng, Chenyuan and Han, Shujun and Gao, Song and Xu, Bingxuan and Niyato, Dusit and Quek, Tony Q. S. and Zhang, Ping},
  journal={IEEE Commun. Surv. Tutor.}, 
  title={Generative Diffusion Models for Wireless Networks: Fundamental, Architecture, and State-of-the-Art}, 
  year={2026},
  volume={28},
  number={},
  pages={5632-5677},
  doi={10.1109/COMST.2026.3671110}}

@ARTICLE{Fesl24,
  author={Fesl, Benedikt and Baur, Michael and Strasser, Florian and Joham, Michael and Utschick, Wolfgang},
  journal={IEEE Wireless Commun. Lett.}, 
  title={Diffusion-Based Generative Prior for Low-Complexity {MIMO} Channel Estimation}, 
  year={2024},
  month={Dec.},
  volume={13},
  number={12},
  pages={3493-3497},
  doi={10.1109/LWC.2024.3474570}}

@ARTICLE{Chen25,
  author={Chen, Zhixiong and Shin, Hyundong and Nallanathan, Arumugam},
  journal={IEEE Trans. Commun.}, 
  title={Generative Diffusion Model-Based Variational Inference for {MIMO} Channel Estimation}, 
  year={2025},
  month={Oct.},
  volume={73},
  number={10},
  pages={9254-9269},
  doi={10.1109/TCOMM.2025.3556753}}

@misc{Chen26,
  author        = {Xi Chen and
                   Homa Esfahanizadeh and
                   Foad Sohrabi},
  title         = {Precoding-Oriented {CSI} Feedback Design with Mutual Information Regularized {VQ-VAE}},
  year          = {2026},
  eprint        = {2602.02508},
  archivePrefix = {arXiv},
  note          = {[Online]. Available: \url{https://arxiv.org/abs/2602.02508}},
}
\endgroup

\end{document}